\documentclass[a4paper,noarxiv,unpublished,onecolumn]{quantumarticle}
\pdfoutput=1

\usepackage[utf8]{inputenc}
\usepackage[english]{babel}
\usepackage[T1]{fontenc}
\usepackage{amsmath}
\DeclareMathOperator*{\argmax}{argmax}
\DeclareMathOperator{\sinc}{sinc}
\DeclareMathOperator{\sincd}{sincd}

\usepackage{hyperref}
\usepackage{tikz}
\usepackage{lipsum}
\usepackage{multirow}

\usepackage[numbers,sort&compress]{natbib}

\usepackage{mathtools}
\usepackage{amssymb}
\usepackage{braket}
\usepackage{amsfonts}

\usepackage{qcircuit}
\usepackage[linesnumbered,ruled,vlined]{algorithm2e}
\usepackage{textcomp}
\usepackage{wasysym}
\usepackage{graphicx}
\usepackage{caption}
\usepackage{subcaption}

\newcommand{\sgn}{\mathop{\mathrm{sgn}}}

\usepackage{amsthm}
\newtheorem{theorem}{Theorem}[section]
\newtheorem{lemma}[theorem]{Lemma}
\newtheorem{corollary}{Corollary}[theorem]
\theoremstyle{definition}

\tolerance=1
\emergencystretch=\maxdimen
\hyphenpenalty=10000
\hbadness=10000

\graphicspath{ {./images/} }

\begin{document}

\title{Properties of The Discrete Sinc Quantum State and Applications to Measurement Interpolation}

\author{Charlee Stefanski}
\orcid{0000-0001-9856-5955}
\affiliation{Wells Fargo}
\affiliation{UC Berkeley}
\author{Vanio Markov}
\affiliation{Wells Fargo}
\author{Constantin Gonciulea}
\orcid{0000-0001-5870-4586}
\affiliation{Wells Fargo}

\maketitle
\begin{abstract}

\end{abstract}

\maketitle

\begin{abstract}
    Extracting the outcome of a quantum computation is a difficult task.
    In many cases, the quantum phase estimation algorithm is used to digitally encode a value in a quantum register
    whose amplitudes' magnitudes reflect the discrete $\sinc$ function.
    In the standard implementation the value is approximated by the most frequent outcome, however, using the
    frequencies of other outcomes allows for increased precision without using additional qubits.
    One existing approach is to use Maximum Likelihood Estimation, which uses the frequencies of all measurement
    outcomes.
    We provide and analyze several alternative estimators, the best of which rely on only the two most frequent
    measurement outcomes.
    The Ratio-Based Estimator uses a closed form expression for the decimal part of the encoded value using the ratio of
    the two most frequent outcomes.
    The Coin Approximation Estimator relies on the fact that the decimal part of the encoded value is very well
    approximated by the parameter of the Bernoulli process represented by the magnitudes of the largest two amplitudes.
    We also provide additional properties of the discrete $\sinc$ state that could be used to design other estimators.
\end{abstract}

\section{\label{sec:intro}Introduction}

Quantum phase estimation is a fundamental method in quantum computing, used as a building block in many other quantum
algorithms, such as Shor's and quantum amplitude estimation.
Its core underlying procedure first creates an analog representation of a periodic signal into quantum state, and
then digitally encodes the period of this signal into the state of a quantum register that can  be efficiently measured.

We refer to this quantum state with remarkable properties as ``discrete sinc'', the ``period encoding state'', the
``phase estimation state'', or the ``interpolation state'', because after a phase correction, its amplitudes match
the interpolation coefficients in the classical interpolation theorem ~\cite{uiocourse, interpolation}.
We call the resulting outcome probability distribution the ``discrete sinc squared'' distribution or ``the Fej\'er
distribution'' because its probabilities match the coefficients in Fej\'er kernels~\cite{Hoffman2007, Janson2010}.
We used this state and the underlying procedure in the quantum phase estimation algorithm to encode discrete
functions, or dictionaries, into quantum state, and interpolate non-integer values ~\cite{interpolation}.

\begin{figure}[ht]
    \centering
    \begin{minipage}{.3\textwidth}
        \centering
        \includegraphics[width=.99\linewidth]{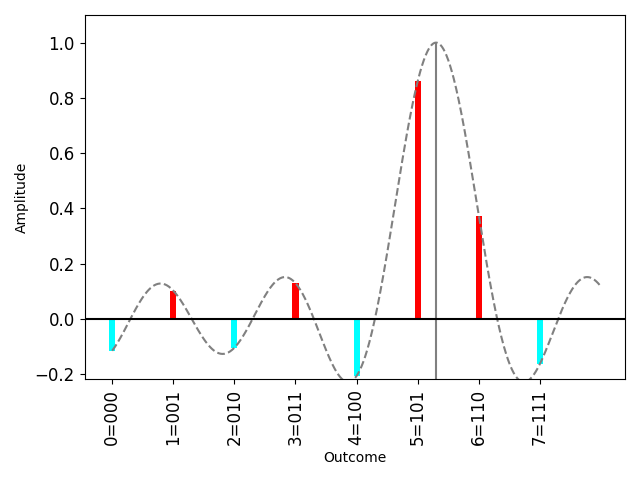}
    \end{minipage}
    \begin{minipage}{.3\textwidth}
        \centering
        \includegraphics[width=.99\linewidth]{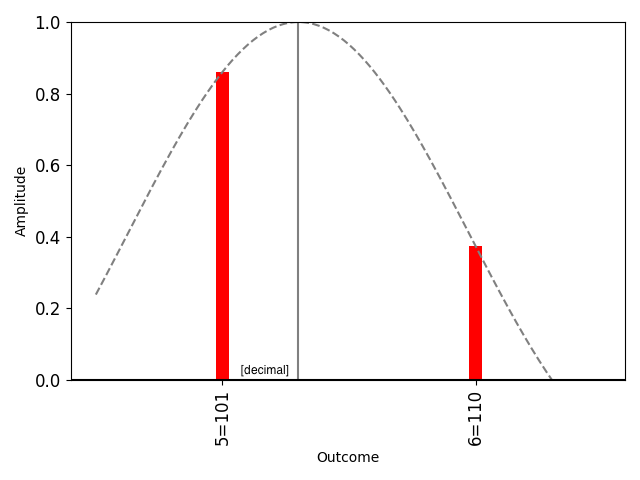}
    \end{minipage}
    \captionof{figure}{Left: Histogram of the amplitudes of a 3-qubit quantum state with the encoded value 5.3
    compared to the discrete sinc function.
    Right: The two largest amplitudes of the state, which can be used to estimate the decimal part of the encoded
    value.}
    \label{fig:encoding_example}
\end{figure}

While the canonical phase/amplitude estimation algorithms use digital encoding of a value to estimate it, efforts
have been made to improve the accuracy of those estimates by interpolating ``in-between'' the discrete values. In
particular, Maximum Likelihood Estimation has been used to post-process measurement results and improve the
estimation precision without additional qubits~\cite{IQAE}.

In this paper we present additional properties of the discrete sinc quantum state and Fej\'er distribution
and provide alternative estimation methods.
We provide closed-form estimators for the expressions for the encoded value, that use consecutive
pairs of amplitudes as inputs.
The pair with the highest magnitudes is the most useful, and we show that it can be used to represent a Bernoulli
process, i.e. a (biased) coin, whose parameter (bias) is an estimate for the decimal part of the encoded value. We
also show that the equation that needs to be solved in order to find the Maximum Likelihood Estimate is a form of
interpolation in the context of the Fej\'er distribution.

\section{\label{sec:prelim}Preliminaries}

The $\sinc$ function is common in digital signal processing~\cite{sinc}.
It can be defined on a set of real numbers as

\begin{equation*}
    \begin{split}
        \sinc(t) & =
        \begin{cases}
            1, & \text{if } t = 0 \\
            \frac{\sin(t)}{t}, & \text{otherwise} \\
        \end{cases} \\
        & = \prod_{j = 1}^{\infty} \cos \left(\frac{t}{2^j} \right)
    \end{split}
\end{equation*}

For a positive integer $n$ and a real number $t$ the function $\sincd_n$ is defined as

\begin{equation*}
    \sincd_{n}(t) = \frac{\sinc(t)}{\sinc\left(\frac{t}{2^n}\right)} = \prod_{j = 1}^{n} \cos \left(\frac{t}{2^j}
    \right).
\end{equation*}

For a real number $t$, the normalized versions of the $\sinc$ and $\sincd$ functions are defined as $\sinc_{\pi}(t) =
\sinc(\pi t)$ and $\sincd_{\pi, n}(t) = \sincd_{n}(\pi t)$, respectively.

Given a positive integer $n$ and a real number $t \in [0, N)$, where $N=2^n$, consider the quantum state

\begin{equation}
    \ket{\phi_{n,t}} = \sum_{k = 0}^{N-1} e^{i \pi \frac{N-1}{N} (t - k)} c_{N,t}(k) \ket{k}_n
    \label{eqn:fejer_state}
\end{equation}

where

\begin{equation*}
    \begin{split}
        c_{N,t}(k) & =
        \begin{cases}
            1, & \text{if } t = k \\
            \frac{1}{N} \frac{\sin((t - k)\pi)}{\sin((t-k)\frac{\pi}{N})}, & \text{otherwise} \\
        \end{cases} \\
        & = \sincd_{\pi, n}(t-k) \\
        & = \prod_{j = 1}^{n} \cos \left((t - k) \frac{\pi}{2^j} \right).
    \end{split}
\end{equation*}

Note that the amplitudes in the state add up to 1:
\begin{equation}
    \sum_{k = 0}^{N-1} e^{i \pi \frac{N-1}{N} (t - k)} c_{N,t}(k)  = 1.
    \label{eqn:identity_1}
\end{equation}

This state encodes the result of the phase estimation algorithm, also used in the amplitude estimation algorithm,
and to encode values and functions ~\cite{interpolation, inner_product}.

If $t \in [0, N)$ is not an integer, the probability mass function of the measurement distribution for the
$\ket{\phi_{n,t}}$ quantum state is

\begin{equation}
    \label{eqn:fejer_pdf}
    p_{N, t}(k) = \frac{1}{N^2} \frac{\sin^2((t - k)\pi)}{\sin^2((t-k)\frac{\pi}{N})}
\end{equation}

for $0 \le k < N$.

The values of $p_{N, t}$ match those of the normalized Fej\'er kernel~\cite{Hoffman2007, Janson2010}.

\begin{lemma}[MLE Property]
    \label{mle_lemma}
    With the notations above, the following identity holds for a non-integer $t \in [0, N)$:

    \begin{equation}
        \label{eqn:mle_identity}
        \frac{1}{N}\sum_{k=0}^{N-1} p_{N, t}(k)\cot((t-k)\frac{\pi}{N}) = \cot(t\pi)
    \end{equation}
\end{lemma}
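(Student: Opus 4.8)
The plan is to first collapse the numerator of the probabilities via a reflection identity, reducing \eqref{eqn:mle_identity} to a purely trigonometric sum, and then to obtain that sum by differentiating a classical product formula three times.

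\emph{Step 1 (simplify the probabilities).} Since $k$ is an integer, $\sin((t-k)\pi) = (-1)^{k}\sin(\pi t)$, so $\sin^{2}((t-k)\pi) = \sin^{2}(\pi t)$ and therefore
\[
p_{N,t}(k) = \frac{\sin^{2}(\pi t)}{N^{2}}\,\frac{1}{\sin^{2}\!\big((t-k)\tfrac{\pi}{N}\big)}.
\]
Because $t$ is non-integer, no argument $(t-k)\tfrac{\pi}{N}$ is a multiple of $\pi$, so every term is well defined. Substituting this into the left-hand side of \eqref{eqn:mle_identity} and writing $\cot x = \cos x/\sin x$, the statement becomes
\[
\frac{\sin^{2}(\pi t)}{N^{3}}\sum_{k=0}^{N-1}\frac{\cos\!\big((t-k)\tfrac{\pi}{N}\big)}{\sin^{3}\!\big((t-k)\tfrac{\pi}{N}\big)} = \frac{\cos(\pi t)}{\sin(\pi t)},
\]
i.e. it suffices to prove, with $\theta := \pi t/N$ (so $\pi t = N\theta$),
\[
\sum_{k=0}^{N-1}\frac{\cos\!\big(\theta - \tfrac{k\pi}{N}\big)}{\sin^{3}\!\big(\theta - \tfrac{k\pi}{N}\big)} = \frac{N^{3}\cos(N\theta)}{\sin^{3}(N\theta)}.
\]

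\emph{Step 2 (the trigonometric identity).} I would start from the classical factorization
\[
\prod_{k=0}^{N-1}\sin\!\Big(\theta - \frac{k\pi}{N}\Big) = \frac{(-1)^{N-1}}{2^{N-1}}\,\sin(N\theta),
\]
which follows from the roots of $z^{N}-1$ (the substitution $k\mapsto N-k$ supplies the factor $(-1)^{N-1}$), or may simply be cited. Taking the logarithmic derivative in $\theta$ gives the cotangent-sum identity $\sum_{k=0}^{N-1}\cot(\theta - k\pi/N) = N\cot(N\theta)$. Differentiating once more yields $\sum_{k=0}^{N-1}\sin^{-2}(\theta - k\pi/N) = N^{2}\sin^{-2}(N\theta)$, and differentiating a second time, using $\tfrac{d}{dx}\sin^{-2}x = -2\cos x/\sin^{3}x$, produces exactly the displayed identity after cancelling the common factor $-2$. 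All differentiations are valid on the open set where $\theta$ avoids the poles $k\pi/N$, which is guaranteed by the non-integrality of $t$.

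\emph{Main obstacle.} The computation is largely routine; the points needing care are (i) pinning down the sign and the power of $2$ in the product formula, and (ii) confirming that the chain of differentiations stays inside the domain of validity — both handled by the hypothesis that $t\notin\mathbb{Z}$. An alternative to the differentiation argument would be a direct partial-fraction expansion of $\cos(N\theta)/\sin^{3}(N\theta)$ over the nodes $\theta = k\pi/N$, but the logarithmic-derivative route is shorter and avoids residue bookkeeping.
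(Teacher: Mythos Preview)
Your proof is correct. Step~1 is a straightforward simplification, and Step~2 is a clean chain: the product formula $\prod_{k=0}^{N-1}\sin(\theta-k\pi/N)=(-1)^{N-1}2^{-(N-1)}\sin(N\theta)$ is standard, the logarithmic derivative gives the cotangent sum, and two further differentiations yield exactly the cubic-sine identity you need. The domain check via $t\notin\mathbb{Z}$ is handled correctly.

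As for comparison: the paper does not actually supply a proof of this lemma. It states the identity, records two equivalent reformulations, and in Appendix~A lists the auxiliary identity $\frac{1}{N}\sum_{k=0}^{N-1}\cot((t-k)\tfrac{\pi}{N})=\cot(t\pi)$, again without proof. Your argument is therefore more than the paper offers; in fact your first differentiation step already proves the appendix identity, and the remaining two differentiations close the gap between that identity and the lemma. If one wanted to mirror the paper's implicit decomposition, one would observe that the ``equivalent form'' $\frac{1}{N}\sum_k(1-p_{N,t}(k))\cot((t-k)\tfrac{\pi}{N})=0$ is precisely the statement that the cotangent sum and the weighted cotangent sum agree---but verifying that still requires the same cubic-sine identity you derived, so nothing is saved. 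Your direct route is the efficient one.
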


This property allows for the estimation of the non-integer parameter $t \in [0, N)$ of a given quantum state
$\ket{\phi_{n,t}}$ by repeated measurement ~\cite{IQAE}.
The estimation as a real number is more precise than the one obtained by just using the integer outcomes of a
measurement, as in the standard phase estimation algorithm.

Equivalent forms of this equation are:

\begin{equation*}
    \frac{1}{N}\sum_{k=0}^{N-1} \left(1 - p_{N, t}(k)\right)\cot((t-k)\frac{\pi}{N}) = 0,
\end{equation*}

and

\begin{equation*}
    \sum_{k=0}^{N-1} p_{N, t}(k) (-1)^k \cos((t-k)\frac{\pi}{N}) c_{N, t}(k) = \cos(t\pi).
\end{equation*}

Estimating the parameter $t$ of the period encoding state $\ket{\phi_{n, t}}$ and its corresponding probability
distribution $p_{N, t}$ from the function $q$ is a statistical inference task. We are looking for the estimate
$\hat{t} \in [0, N)$ such that the distribution $p_{N, \hat{t}}$ is the best fit for the function $q$ obtained
through measurement.

\paragraph{Kullback–Leibler divergence or maximum likelihood estimation.}

The relative entropy, or the Kullback–Leibler divergence, from $q$ to $p_{N, t}$ is

\begin{equation*}
    \begin{split}
        D_{KL}(q \parallel p_{N, t}) & = \sum_{k=0}^{N-1} q(k) \log \left(\frac{q(k)}{p_{N, t}(k)}\right) \\
        & = \sum_{k=0}^{N-1} q(k) \log (q(k)) - \log \mathcal{L}(t \lvert q)
    \end{split}
\end{equation*}

where

\begin{equation*}
    \begin{split}
        \mathcal{L}(t \lvert q) & = \prod_{k=0}^{N-1} p_{N, t}(k)^{q(k)} \\
    \end{split}
\end{equation*}

is the likelihood of the parameter $t$ given the measurement reflected in the function $q$.

Minimizing the Kullback–Leibler divergence is the same as maximizing the log-likelihood function, which is
the essence of the Maximum Likelihood Estimation method:

\begin{equation*}
    \hat{t}_{\text{MLE}} = \argmax_{t \in [0, N)} \mathcal{L}(t \lvert q)
\end{equation*}

Setting the derivative of the log-likelihood function to zero gives the equation for Maximum Likelihood Estimate:

\begin{equation}
    \label{eqn:mle_equation}
    \begin{split}
        0 & = \frac{\partial}{\partial t} \log\mathcal{L}(t \lvert q)(\hat{t}_{\text{MLE}}) \\
        0 & = \frac{\partial}{\partial t} \sum_{k=0}^{N-1} q(k) \log (p_{N, t}(k))(\hat{t}_{\text{MLE}}) \\
        \cot(\hat{t}_{\text{MLE}}\pi) & = \frac{1}{N}\sum_{k=0}^{N-1} q(k)\cot((\hat{t}_{\text{MLE}}-k)\frac{\pi}{N})
    \end{split}
\end{equation}

Note that if $q = p_{N, t}$ (an ideal measurement) then $\hat{t}$ satisfies the equation as reflected in the identity in
the preliminaries (Eq.~\ref{eqn:mle_equation}).

This method has the benefit of well-understood theory, including confidence intervals.
However, the likelihood function built from measurements on quantum devices in the NISQ era may diverge significantly
from the true one.
Figure~\ref{fig:likelihoods} plots the likelihood function for various $t$ values using experiments run on real
quantum devices.
These experiments are discussed in detail in Section~\ref{sec:numerical_exp}.

\begin{figure*}[ht]
    \centering
    \begin{minipage}{.3\textwidth}
        \centering
        \includegraphics[width=.99\linewidth,trim= 40pt 40pt 40pt 40pt]{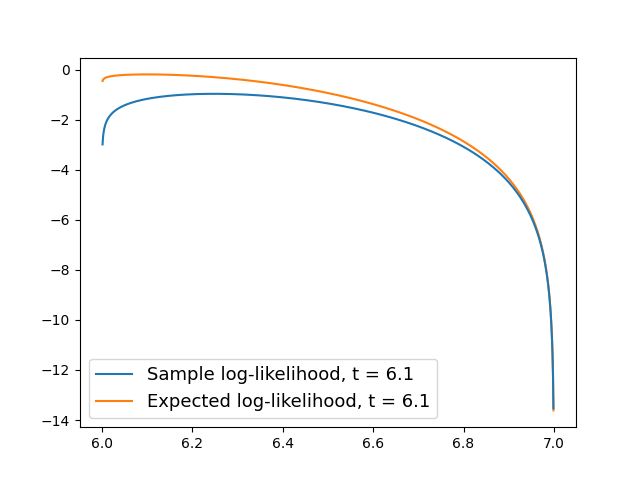}
    \end{minipage}
    \begin{minipage}{.3\textwidth}
        \centering
        \includegraphics[width=.99\linewidth, trim= 40pt 40pt 40pt 40pt]{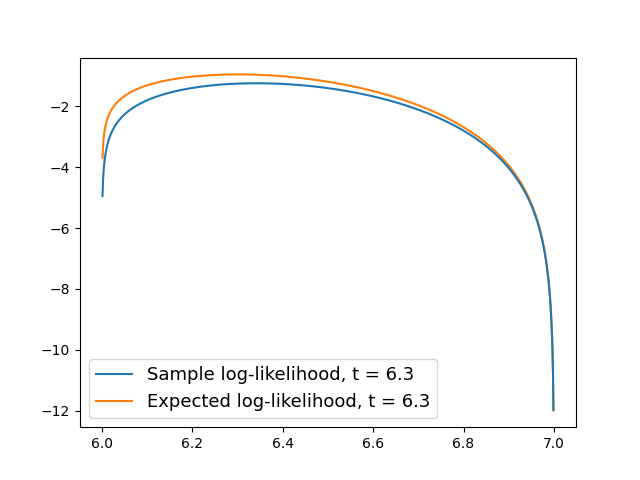}
    \end{minipage}
    \begin{minipage}{.3\textwidth}
        \centering
        \includegraphics[width=.99\linewidth, trim= 40pt 40pt 40pt 40pt]{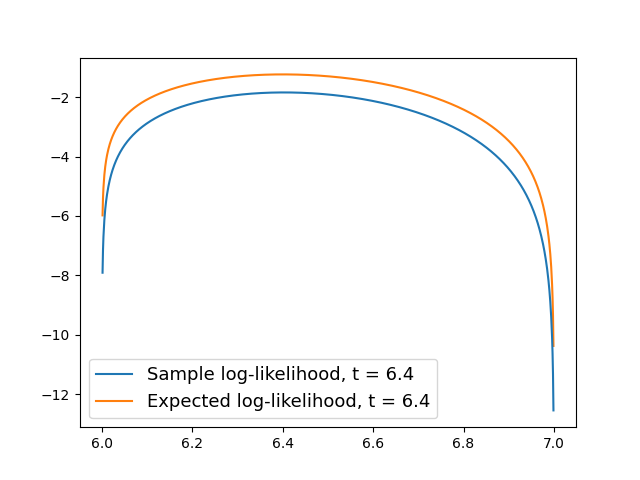}
    \end{minipage}

    \bigskip
    \centering
    \begin{minipage}{.3\textwidth}
        \centering
        \includegraphics[width=.99\linewidth, trim= 40pt 40pt 40pt 40pt]{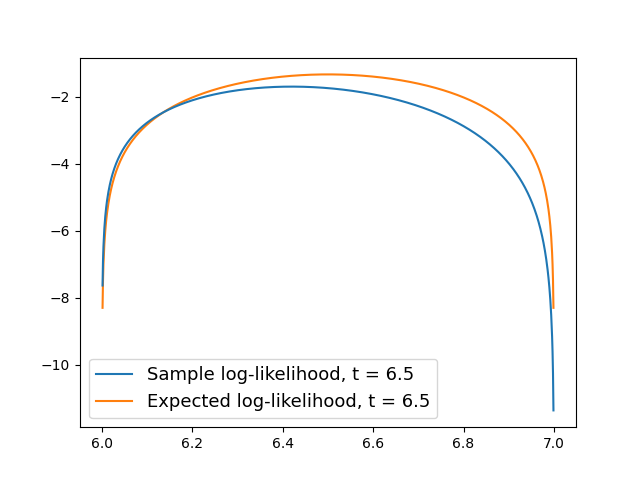}
    \end{minipage}
    \begin{minipage}{.3\textwidth}
        \centering
        \includegraphics[width=.99\linewidth, trim= 40pt 40pt 40pt 40pt]{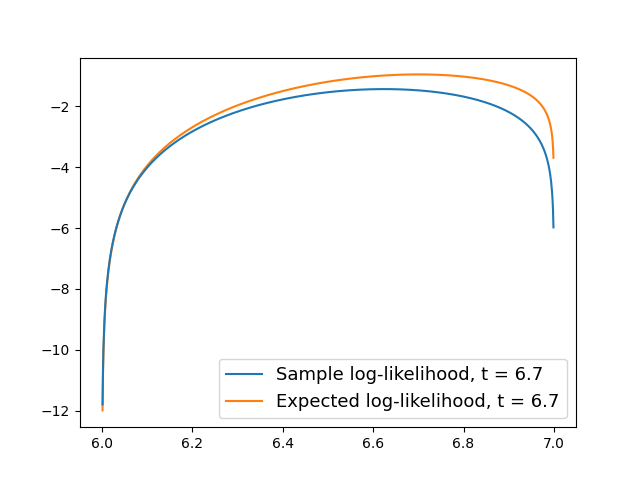}
    \end{minipage}
    \begin{minipage}{.3\textwidth}
        \centering
        \includegraphics[width=.99\linewidth, trim= 40pt 40pt 40pt 40pt]{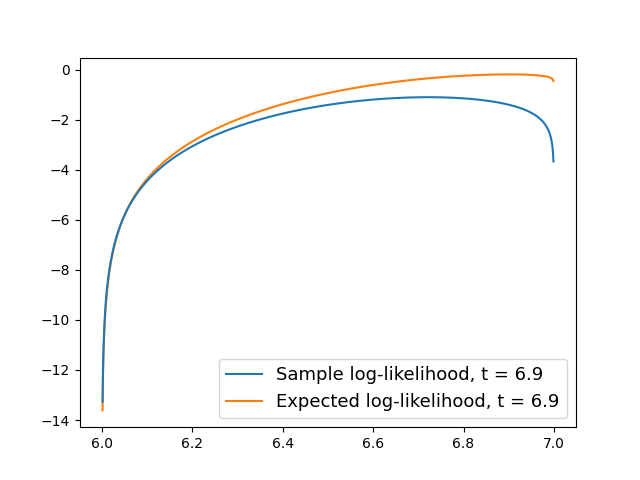}
    \end{minipage}
    \captionof{figure}{The theoretical log-likelihood function compared to the log-likelihood using measurement
    probabilities from experiments on \textit{ibm\_perth} $\log\mathcal{L}(t\lvert q)$ for several values $t$.}
    \label{fig:likelihoods}
\end{figure*}

\section{\label{sec:methods}Discrete Sinc Quantum State and Fej\'er Distribution Properties}

In this section we provide properties of the discrete $\sinc$ quantum state that are essential to the design of estimators
in the next section.

\begin{lemma}
    \label{lemma1}
    For a non-integer value $t \in (0, N)$ and an integer $0 \le k < N-1$ we have

    \begin{equation*}
            c_{N, t}(k) = (-1)^{\lfloor t \rfloor - k}\sgn(\lfloor t \rfloor - k)\sqrt{p_{N, t}(k)}
    \end{equation*}

    where $\sgn(x) = -1$ if $x < 0$ and $\sgn(x) = 1$ if $x \ge 0$.
\end{lemma}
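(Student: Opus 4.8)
The plan is to notice that the statement is almost immediate once one observes, directly from the definitions in the preliminaries, that $p_{N,t}(k) = c_{N,t}(k)^2$, so that $\sqrt{p_{N,t}(k)} = \lvert c_{N,t}(k)\rvert$. Hence the lemma reduces entirely to computing the \emph{sign} of $c_{N,t}(k)$, and I would first state this reduction: it suffices to show $\sgn\bigl(c_{N,t}(k)\bigr) = (-1)^{\lfloor t\rfloor - k}\,\sgn(\lfloor t\rfloor - k)$.

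Next I would introduce $x = t - k$ and the integer $m = \lfloor t\rfloor - k$, writing $x = m + \{t\}$ where $\{t\} = t - \lfloor t\rfloor \in (0,1)$ because $t$ is non-integer. From $c_{N,t}(k) = \tfrac1N\,\sin(x\pi)/\sin(x\pi/N)$, the sign of $c_{N,t}(k)$ is the product of the signs of $\sin(x\pi)$ and of $\sin(x\pi/N)$. For the numerator, the identity $\sin(a+m\pi) = (-1)^m\sin(a)$ gives $\sin(x\pi) = (-1)^m\sin(\{t\}\pi)$, and since $\{t\}\in(0,1)$ we have $\sin(\{t\}\pi) > 0$, so $\sgn\bigl(\sin(x\pi)\bigr) = (-1)^m$.

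For the denominator I would use the range constraint: with $0 \le k < N-1$ and $0 < t < N$ one gets $x \in (-N, N)$, hence $x\pi/N \in (-\pi,\pi)$, where $\sin$ has the same sign as its argument, so $\sgn\bigl(\sin(x\pi/N)\bigr) = \sgn(x)$. It then remains to check $\sgn(x) = \sgn(m)$ under the convention $\sgn(0) = 1$: if $m \ge 0$ then $x = m + \{t\} > 0$, and if $m \le -1$ then $x = m + \{t\} < m+1 \le 0$. Multiplying the three sign contributions and recombining with $\lvert c_{N,t}(k)\rvert = \sqrt{p_{N,t}(k)}$ yields the claimed identity.

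I do not expect a genuine obstacle here; the only place requiring care is the bookkeeping of sign conventions — in particular treating the case $m = 0$ consistently with $\sgn(0) = 1$, and verifying $\lvert x\rvert < N$ so that $x\pi/N$ stays on the principal branch of $\sin$ on which $\sgn\sin = \sgn$. I would make the range bound $x \in (-N,N)$ explicit, since this is exactly where the hypothesis $k < N-1$ (together with $0 < t < N$) is used.
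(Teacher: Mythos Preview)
Your proposal is correct and follows essentially the same approach as the paper: the paper's proof is the same numerator/denominator sign computation you describe, compressed into a single chain of equalities (writing $\sin((t-k)\pi) = (-1)^{\lfloor t\rfloor-k}\lvert\sin(t\pi)\rvert$ and $\sin((t-k)\tfrac{\pi}{N}) = \sgn(\lfloor t\rfloor-k)\,\lvert\sin((t-k)\tfrac{\pi}{N})\rvert$). Your version is more explicit about the range bound $x\pi/N \in (-\pi,\pi)$ and the case $m=0$, which the paper leaves implicit.
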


\begin{proof}

    \begin{equation*}
        \begin{split}
            c_{N, t}(k) & = \frac{1}{N} \frac{\sin((t-k)\pi)}{\sin((t-k)\frac{\pi}{N})} \\
            & = \frac{1}{N} \frac{(-1)^{\lfloor t \rfloor - k}|\sin(t\pi)|}{\sgn(\lfloor t \rfloor - k)|\sin((t-k)
                \frac{\pi}{N})|} \\
            & = (-1)^{\lfloor t \rfloor - k}\sgn(\lfloor t \rfloor - k)\sqrt{p_{N, t}(k)}
        \end{split}
    \end{equation*}
\end{proof}

\begin{lemma}[Ratio-Based Estimation]
    \label{re_lemma}
    Given a non-integer value $t \in (0, N)$ we have

    \begin{equation}
        \begin{split}
            \tan((t-k)\frac{\pi}{N}) &= \frac{\sin\frac{\pi}{N}}{\cos\frac{\pi}{N} + \sqrt{\frac{p_{N, t}(k)}{p_{N,
            t}(k+1)}}} ,
        \end{split}
        \label{eqn:re_formula}
    \end{equation}

    for an integer $0 \le k < N-1$, and

    \begin{equation}
        \begin{split}
            \tan((t-N + 1)\frac{\pi}{N}) &= \frac{\sin\frac{\pi}{N}}{\cos\frac{\pi}{N} - \sqrt{\frac{p_{N, t}(N-1)
            }{p_{N, t}(0)}}}.
        \end{split}
        \label{eqn:re_formula1}
    \end{equation}

\end{lemma}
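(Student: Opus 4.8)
The plan is to reduce everything to the product formula $c_{N,t}(k)=\prod_{j=1}^n\cos\!\big((t-k)\frac{\pi}{2^j}\big)$, since the ratio of consecutive amplitudes should collapse most of the factors. First I would write $c_{N,t}(k)$ and $c_{N,t}(k+1)$ as products and observe that they differ only through the argument shifted by one unit of $k$; after cancelling the common part I expect to be left with an expression essentially of the form $\frac{\sin((t-k)\pi)}{\sin((t-k-1)\pi)}\cdot\frac{\sin((t-k-1)\frac{\pi}{N})}{\sin((t-k)\frac{\pi}{N})}$. The first factor is $-1$ since $\sin$ is antiperiodic with period $\pi$, so the whole thing reduces to a ratio of two sines whose arguments differ by $\frac{\pi}{N}$.

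Next I would pass from the amplitude ratio to the probability ratio. By Lemma~\ref{lemma1}, $\sqrt{p_{N,t}(k)} = |c_{N,t}(k)|$, so $\sqrt{p_{N,t}(k)/p_{N,t}(k+1)} = |c_{N,t}(k)/c_{N,t}(k+1)|$, which by the previous step equals $\big|\sin((t-k-1)\frac{\pi}{N})\big|\,/\,\big|\sin((t-k)\frac{\pi}{N})\big|$. I would then set $\theta = (t-k)\frac{\pi}{N}$ and rewrite $\sin(\theta-\frac{\pi}{N}) = \sin\theta\cos\frac{\pi}{N} - \cos\theta\sin\frac{\pi}{N}$, giving
\begin{equation*}
    \sqrt{\frac{p_{N,t}(k)}{p_{N,t}(k+1)}} = \left|\cos\frac{\pi}{N} - \cot\theta\,\sin\frac{\pi}{N}\right|.
\end{equation*}
Solving this for $\cot\theta$ and hence for $\tan\theta$ should directly yield Eq.~\ref{eqn:re_formula}, modulo sign bookkeeping. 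The key sign fact I would nail down: for $0\le k<N-1$ and $t\in(0,N)$, the quantity $\cos\frac{\pi}{N} - \cot\theta\,\sin\frac{\pi}{N}$ is nonnegative (equivalently $\tan\theta \le \tan\frac{\pi}{N}$ in the relevant range, or $\theta\le\frac{\pi}{N}$ when $\theta>0$, with the analogous statement for $\theta\le 0$), so the absolute value can be dropped with a plus sign; this is what produces the $+\cos\frac{\pi}{N}$ in the denominator. For the boundary case $k=N-1$ in Eq.~\ref{eqn:re_formula1}, the "next" index wraps to $0$, which introduces an extra antiperiod shift of $\pi$ in the large-argument sine relative to the small-argument sine; that flips one sign and accounts for the $-\cos\frac{\pi}{N}$ there.

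The main obstacle is the careful sign analysis rather than the trigonometric identities, which are routine. Specifically I need to verify: (i) the sign of the amplitude ratio via Lemma~\ref{lemma1}, tracking the parity $(-1)^{\lfloor t\rfloor - k}$ and the $\sgn(\lfloor t\rfloor - k)$ factor as $k$ and $k+1$ straddle or stay on one side of $\lfloor t\rfloor$; and (ii) the sign of $\cos\frac{\pi}{N} - \cot\theta\sin\frac{\pi}{N}$, which amounts to showing $\theta - \frac{\pi}{N}$ and $\theta$ have a controlled relationship on $(t-k)\in(-(N-1), N)$. I would handle (i) by noting that the ratio $c_{N,t}(k)/c_{N,t}(k+1)$ already telescoped to a ratio of sines with no leftover parity ambiguity once the $\sin((t-k)\pi)/\sin((t-k-1)\pi) = -1$ identity is used directly on the amplitudes (avoiding Lemma~\ref{lemma1} for the sign and using it only to identify magnitudes). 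That said, the cleanest route may be to derive the amplitude-ratio identity first as its own one-line computation, then invoke Lemma~\ref{lemma1} purely to translate $|c|$ into $\sqrt{p}$, so that all the genuine sign content lives in one place. The wrap-around case should then be a near-copy with a single extra $\pi$ shift tracked through.
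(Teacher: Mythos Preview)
Your route is the paper's route. The paper does not detour through the product formula at all; it works directly with $c_{N,t}(k)=\frac{1}{N}\sin((t-k)\pi)/\sin((t-k)\frac{\pi}{N})$ and writes the one-line chain
\[
\sqrt{\tfrac{p_{N,t}(k)}{p_{N,t}(k+1)}}\;=\;\tfrac{c_{N,t}(k)}{c_{N,t}(k+1)}\;=\;-\tfrac{\sin((t-k-1)\frac{\pi}{N})}{\sin((t-k)\frac{\pi}{N})}\;=\;-\cos\tfrac{\pi}{N}+\tfrac{\sin\frac{\pi}{N}}{\tan((t-k)\frac{\pi}{N})},
\]
then solves for the tangent. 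You land on the same sine ratio and the same angle-subtraction expansion, so the computational core is identical.

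The gap is in your sign bookkeeping. Your stated equivalence is inverted: for $\theta\in(0,\pi)$ one has
\[
\cos\tfrac{\pi}{N}-\cot\theta\,\sin\tfrac{\pi}{N}\ \ge\ 0 \quad\Longleftrightarrow\quad \cot\theta\ \le\ \cot\tfrac{\pi}{N} \quad\Longleftrightarrow\quad \theta\ \ge\ \tfrac{\pi}{N},
\]
not $\theta\le\frac{\pi}{N}$. In particular, when $k=\lfloor t\rfloor$ (so $\theta=(t-k)\frac{\pi}{N}\in(0,\frac{\pi}{N})$) the expression inside your absolute value is \emph{negative}, and it is precisely that sign flip which delivers the $+\sqrt{p_{N,t}(k)/p_{N,t}(k+1)}$ in the denominator; ``nonnegative $\Rightarrow$ drop the bars with a plus'' would instead yield a minus there. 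Conversely, for $k\ne\lfloor t\rfloor$ the expression is nonnegative and the resulting formula carries a minus rather than a plus. The paper's first equality above quietly assumes $c_{N,t}(k)/c_{N,t}(k+1)>0$, which by Lemma~\ref{lemma1} happens exactly when $k=\lfloor t\rfloor$ --- also the only case used downstream in Corollary~\ref{ratio_corollary}. Your wrap-around remark for $k=N-1$ is correct in spirit (one extra $\pi$-shift appears because $\sin(t\pi/N)=-\sin(\theta-\frac{\pi}{N})$ with $\theta=(t-N+1)\frac{\pi}{N}$), but the same sign subtlety governs that case as well, so the claimed uniform sign will not go through as written.
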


\begin{proof}
    The following proves Eq.~\ref{eqn:re_formula}:

    \begin{equation*}
        \begin{split}
            \sqrt{\frac{p_{N, t}(k)}{p_{N, t}(k+1)}} & =\frac{c_{N, t}(k)}{c_{N, t}(k+1)} \\
            & = - \frac{\sin((t-k-1)\frac{\pi}{N})}{\sin((t-k)\frac{\pi}{N})} \\
            & = - \frac{\sin((t-k)\frac{\pi}{N})\cos\frac{\pi}{N} - \cos((t-k)\frac{\pi}{N})\sin\frac{\pi}{N}}{\sin((t-k)\frac{\pi}{N})} \\
            & = - \cos\frac{\pi}{N} + \frac{1}{\tan((t-k)\frac{\pi}{N})}\sin\frac{\pi}{N}
        \end{split}
    \end{equation*}

    Therefore,
    \begin{equation*}
        \begin{split}
            \tan((t-k)\frac{\pi}{N}) &= \frac{\sin\frac{\pi}{N}}{\cos\frac{\pi}{N} + \sqrt{\frac{p_{N, t}(k)}{p_{N,
            t}(k+1)}}}.
        \end{split}
    \end{equation*}

\end{proof}

\begin{corollary}
    \label{ratio_corollary}
    For $t \in (k, k+1)$ and an integer $0 \le k < N-1$ we have
    \begin{equation*}
        t = k + \frac{N}{\pi} \arctan\left(\frac{\sin\frac{\pi}{N}}{\cos\frac{\pi}{N} + \sqrt{\frac{p_{N, t}(k)}{p_{N, t}(k+1)}}}\right).
    \end{equation*}
\end{corollary}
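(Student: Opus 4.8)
The plan is to read the statement off Lemma~\ref{re_lemma} by inverting the tangent in Eq.~\ref{eqn:re_formula}. First I would restrict attention to the regime $t \in (k, k+1)$ with $0 \le k \le N-2$, so that Eq.~\ref{eqn:re_formula} applies verbatim and expresses $\tan((t-k)\frac{\pi}{N})$ in terms of the ratio $p_{N,t}(k)/p_{N,t}(k+1)$. The key observation is that the angle $(t-k)\frac{\pi}{N}$ lies in the principal branch $(-\frac{\pi}{2},\frac{\pi}{2})$ of $\arctan$: since $t>k$ it is positive, and since $t-k<1$ and $N=2^n\ge 2$ it is strictly less than $\frac{\pi}{2}$. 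Therefore applying $\arctan$ to both sides of Eq.~\ref{eqn:re_formula} recovers $(t-k)\frac{\pi}{N}$ exactly, with no branch ambiguity, and solving for $t$ produces the displayed closed form.

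Before concluding I would check that every expression appearing in the formula is well-defined. For non-integer $t$ we have $\sin^2((t-j)\pi)>0$ for every integer $j$, so $p_{N,t}(k)$ and $p_{N,t}(k+1)$ (see Eq.~\ref{eqn:fejer_pdf}) are both strictly positive; hence the quantity under the square root is a positive real, the denominator $\cos\frac{\pi}{N}+\sqrt{p_{N,t}(k)/p_{N,t}(k+1)}$ is strictly positive (the cosine is nonnegative for $N\ge 2$ and the root is strictly positive), and the fraction inside $\arctan$ is a legitimate positive number.

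The only genuine content beyond a direct substitution is the branch check in the first paragraph, namely verifying that $(t-k)\frac{\pi}{N}$ never reaches $\frac{\pi}{2}$, which is exactly where the hypotheses $t\in(k,k+1)$ and $k\le N-2$ are used. (For $t\in(N-1,N)$ one would instead have to invoke Eq.~\ref{eqn:re_formula1} and track the sign change in its denominator, which is why the corollary excludes that last interval.) Everything else is an immediate consequence of Lemma~\ref{re_lemma}, so I do not anticipate any obstacle of substance here.
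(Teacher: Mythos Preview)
Your proposal is correct and matches the paper's intent: the corollary is stated immediately after Lemma~\ref{re_lemma} without a separate proof, so the only content is precisely the inversion of the tangent in Eq.~\ref{eqn:re_formula}, together with the branch check you supply. Your explicit verification that $(t-k)\frac{\pi}{N}\in(0,\frac{\pi}{2})$ and that the denominator is strictly positive is a welcome addition that the paper leaves implicit.
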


If $t \in (N-1, N)$ an adjustment needs to be made to the formula:

\begin{equation*}
    t = N-1 + \frac{N}{\pi} \arctan\left(\frac{\sin\frac{\pi}{N}}{\cos\frac{\pi}{N} - \sqrt{\frac{p_{N, t}(N-1)}{p_{N, t}(0)}}}\right).
\end{equation*}

These are also an analytical solutions for Eq.~\ref{eqn:mle_equation}.

\begin{lemma}[Coin Approximation]
    \label{coin_lemma}
    Given a non-integer value $t \in (k, k+1)$ and an integer $0 \le k < N-1$, if N is sufficiently large
    the decimal part of $t$ can be approximated by

\begin{equation*}
    \begin{split}
        t-k & \approx \frac{1}{1 + \sqrt{\frac{p_{N, t}(k)}{p_{N, t}(k+1)}}}\\
        & = \frac{\sqrt{p_{N, t}(k+1)}}{\sqrt{p_{N, t}(k)} + \sqrt{p_{N, t}(k+1)}}.
    \end{split}
\end{equation*}
\end{lemma}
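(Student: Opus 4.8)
The plan is to obtain the approximation directly from the exact Ratio-Based identity of Lemma~\ref{re_lemma} by a small-angle expansion in the parameter $\pi/N$, which is small precisely when $N$ is large. Write $r = \sqrt{p_{N,t}(k)/p_{N,t}(k+1)}$. Lemma~\ref{re_lemma} gives the exact relation
\begin{equation*}
    \tan\!\left((t-k)\frac{\pi}{N}\right) = \frac{\sin\frac{\pi}{N}}{\cos\frac{\pi}{N} + r},
\end{equation*}
and the corollary recasts this as $t - k = \frac{N}{\pi}\arctan\!\big(\tfrac{\sin(\pi/N)}{\cos(\pi/N)+r}\big)$. First I would divide both sides of the displayed relation by $\pi/N$: on the left, $\frac{N}{\pi}\tan\!\big((t-k)\frac{\pi}{N}\big) = (t-k)\cdot\frac{\tan x}{x}$ with $x = (t-k)\frac{\pi}{N}$, and since $t-k\in(0,1)$ this equals $(t-k)\big(1 + O(1/N^2)\big)$; on the right, $\frac{N}{\pi}\sin\frac{\pi}{N} = 1 + O(1/N^2)$ and $\cos\frac{\pi}{N} = 1 + O(1/N^2)$, so the right-hand side equals $\frac{1}{1+r}\big(1 + O(1/N^2)\big)$, provided $r$ is bounded. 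Equating and rearranging then yields $t - k = \frac{1}{1+r} + O(1/N^2)$, which is the claimed approximation; the second displayed equality in the lemma is the trivial algebraic rearrangement obtained by multiplying numerator and denominator by $\sqrt{p_{N,t}(k+1)}$.

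To justify that $r$ stays bounded I would reuse the intermediate computation in the proof of Lemma~\ref{re_lemma}, namely $r = -\cos\frac{\pi}{N} + \sin\frac{\pi}{N}\cot\!\big((t-k)\frac{\pi}{N}\big)$. For a fixed non-integer $t$ this tends to $\frac{1-(t-k)}{t-k}$ as $N\to\infty$, a finite positive constant, so both $r$ and $\frac{1}{1+r}$ are $O(1)$ and the $O(1/N^2)$ error estimate above is legitimate. As a sanity check, this same limit shows that the right-hand side of the approximation, $\frac{1}{1+r}$, converges to $\frac{1}{1 + (1-(t-k))/(t-k)} = t-k$, so the two sides of the claimed approximation share the common limit $t-k$; this is the conceptual reason the ``coin bias'' $\frac{\sqrt{p_{N,t}(k+1)}}{\sqrt{p_{N,t}(k)}+\sqrt{p_{N,t}(k+1)}}$ is an asymptotically exact estimator of the decimal part of $t$.

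The main obstacle — really the only delicate point — is making the error control uniform in $t$. Both the hidden constant in the $O(1/N^2)$ term and the bound on $r$ degrade as $t-k$ approaches $0$ or $1$, i.e.\ as $t$ nears an integer, since $\cot\!\big((t-k)\frac{\pi}{N}\big)$ and its reciprocal are then large. For the pointwise statement (``$N$ sufficiently large'' with $t$ fixed) this causes no difficulty, but if one wants an explicit uniform bound one must restrict to $t$ with $\min(t-k,\,k+1-t)\ge\delta$ for some $\delta>0$ and carry the $\delta$-dependence through the expansion; I would state the lemma in that pointwise form and relegate the quantitative version to a remark.
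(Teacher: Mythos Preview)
Your proposal is correct and follows essentially the same approach as the paper: both derive the approximation by applying small-angle expansions ($\sin(\pi/N)\approx\pi/N$, $\cos(\pi/N)\approx 1$, $\tan((t-k)\pi/N)\approx(t-k)\pi/N$) to the exact Ratio-Based identity of Lemma~\ref{re_lemma}. Your version is more careful---tracking $O(1/N^2)$ errors, verifying that $r$ is bounded, and noting the non-uniformity near integer $t$---whereas the paper simply states the three substitutions and stops.
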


\begin{proof}
    Given the fact that when $N$ is sufficiently large, $\frac{1}{N}$ is close to $0$, we can approximate $\sin
    \left(\frac{\pi}{N}\right)$ by $\frac{\pi}{N}$, $\tan\left((t-k)\frac{\pi}{N}\right)$ by $(t-k)\frac{\pi}{N}$, and
    $\cos\left(\frac{\pi}{N}\right)$ by $1$ in Eq.~\ref{eqn:re_formula} and Eq.~\ref{eqn:re_formula1}.
\end{proof}

\begin{corollary} The decimal part of $t$ can be approximated as the bias of a coin that lands heads-up $\lfloor
M\sqrt{p_{N, t}(k+1)} \rfloor$ times and tails-up $\lfloor M\sqrt{p_{N, t}(k)} \rfloor$ times, where the integer
factor M is chosen based on the desired precision.

\end{corollary}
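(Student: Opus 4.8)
The plan is to read the corollary as nothing more than Lemma~\ref{coin_lemma} together with an elementary rounding estimate. Abbreviate $a := \sqrt{p_{N,t}(k+1)}$ and $b := \sqrt{p_{N,t}(k)}$; because $t \in (k,k+1)$ is non-integer we have $\sin((t-k)\pi) \neq 0$, hence $a > 0$ and $b > 0$. Lemma~\ref{coin_lemma} already gives $t - k \approx \frac{a}{a+b}$, and $\frac{a}{a+b}$ is precisely the probability of heads for a coin whose heads-to-tails odds are $a : b$. A coin observed to land heads $\lfloor Ma \rfloor$ times and tails $\lfloor Mb \rfloor$ times has empirical head frequency $\frac{\lfloor Ma \rfloor}{\lfloor Ma \rfloor + \lfloor Mb \rfloor}$, so it suffices to show that this frequency approaches $\frac{a}{a+b}$ as $M \to \infty$, with $M$ being the precision knob named in the statement.

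For that rounding estimate I would write $\lfloor Ma \rfloor = Ma - \varepsilon_1$ and $\lfloor Mb \rfloor = Mb - \varepsilon_2$ with $\varepsilon_1,\varepsilon_2 \in [0,1)$, and compute
\[
\frac{\lfloor Ma \rfloor}{\lfloor Ma \rfloor + \lfloor Mb \rfloor} - \frac{a}{a+b} = \frac{a\varepsilon_2 - b\varepsilon_1}{\bigl(\lfloor Ma \rfloor + \lfloor Mb \rfloor\bigr)(a+b)} .
\]
Since $|a\varepsilon_2 - b\varepsilon_1| \le \max(a,b) \le a+b$ and $\lfloor Ma \rfloor + \lfloor Mb \rfloor \ge M(a+b) - 2 > 0$ once $M(a+b) > 2$, this yields
\[
\left| \frac{\lfloor Ma \rfloor}{\lfloor Ma \rfloor + \lfloor Mb \rfloor} - \frac{a}{a+b} \right| \le \frac{1}{M(a+b) - 2},
\]
which tends to $0$ as $M \to \infty$.

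Finally I would combine the two bounds by the triangle inequality,
\[
\left| (t-k) - \frac{\lfloor Ma \rfloor}{\lfloor Ma \rfloor + \lfloor Mb \rfloor} \right| \le \left| (t-k) - \frac{a}{a+b} \right| + \frac{1}{M(a+b) - 2},
\]
where the first term on the right is the fixed, $N$-dependent error already estimated in the proof of Lemma~\ref{coin_lemma}, and the second term is driven to zero by increasing $M$. I do not expect a genuine obstacle here; the only points requiring a little care are keeping the two error sources separate — the modeling error from replacing $\sin(\pi/N)$, $\cos(\pi/N)$ and $\tan((t-k)\pi/N)$ by their small-argument approximations, which is frozen once the state $\ket{\phi_{n,t}}$ is fixed, versus the discretization error in $M$, which the user controls — and noting that all the denominators are bounded away from zero for every sufficiently large $M$ because $a + b > 0$.
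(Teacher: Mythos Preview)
Your proposal is correct and matches the paper's intent: the paper states this corollary without proof, treating it as an immediate reinterpretation of Lemma~\ref{coin_lemma} (the maximum-likelihood bias of a coin with $H$ heads and $T$ tails being $H/(H+T)$). Your explicit rounding estimate in $M$ goes beyond anything the paper writes down but is exactly the kind of detail one would supply if asked.
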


\begin{lemma}[Interpolation Formula]
    \label{lemma_interpolation}
    Combining the classical interpolation theorem~\cite{uiocourse} and Lemma~\ref{lemma1}, for a well-behaved
    (periodic, band limited as in~\cite{interpolation}) function $f: \{0, \mathellipsis, N-1\} \rightarrow \mathbb{R}$:

    \begin{equation*}
        f(t) = \sum_{k = 0}^{N-1} f\left(\frac{k}{N}\right) c_{N, t}(k) = \sum_{k = 0}^{N-1} f\left
        (\frac{k}{N}\right) (-1)^{\lfloor t \rfloor - k}\sgn(\lfloor t \rfloor - k)\sqrt{p_{N, t}(k)}
    \end{equation*}

\end{lemma}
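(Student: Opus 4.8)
The plan is to prove this by combining two ingredients in sequence: first the classical interpolation theorem, which supplies the identity $f(t)=\sum_{k=0}^{N-1} f(k/N)\,c_{N,t}(k)$, and then Lemma~\ref{lemma1}, which lets us rewrite each coefficient $c_{N,t}(k)$ in terms of the Fej\'er probabilities. The second step is a termwise substitution and carries essentially no content; all the real work is in invoking the first step with the correct hypotheses.

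For the first equality, I would observe that, after the rescaling $x=(t-k)/N$, the coefficient $c_{N,t}(k)=\sincd_{\pi,n}(t-k)=\tfrac1N\frac{\sin(\pi(t-k))}{\sin(\pi(t-k)/N)}$ is the periodic (Dirichlet) sinc kernel $\tfrac1N\frac{\sin(N\pi x)}{\sin(\pi x)}$, i.e.\ exactly the reconstruction kernel in the classical trigonometric interpolation theorem for periodic band-limited functions sampled on the uniform grid $\{k/N : 0\le k<N\}$. The geometric-sum identity $e^{i\pi\frac{N-1}{N}(t-k)}c_{N,t}(k)=\tfrac1N\sum_{j=0}^{N-1}e^{i2\pi j(t-k)/N}$ makes this concrete: summed against the samples of a band-limited $f$ this is the inverse-DFT reconstruction evaluated off the grid, and it returns $f(t)$ exactly --- the special case $f\equiv1$ being Eq.~\ref{eqn:identity_1}. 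I would quote the theorem in the form used in \cite{uiocourse, interpolation}, where the ``well-behaved'' (periodic, band-limited) hypothesis is arranged precisely so that the kernel appearing in the reconstruction sum is $c_{N,t}(k)$.

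Granting the interpolation identity, the proof finishes by applying Lemma~\ref{lemma1} to each summand: $c_{N,t}(k)=(-1)^{\lfloor t\rfloor-k}\sgn(\lfloor t\rfloor-k)\sqrt{p_{N,t}(k)}$, and substituting gives the second displayed expression. Lemma~\ref{lemma1} is stated for $0\le k<N-1$, but the same sign bookkeeping --- or a direct evaluation analogous to the one behind Eq.~\ref{eqn:re_formula1} --- yields the identity at $k=N-1$ as well, so the substitution is legitimate for every index in the sum.

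The main obstacle is not computation but the precise statement of the interpolation theorem when $N=2^n$ is even. For even $N$ the Dirichlet kernel $c_{N,t}$ is \emph{not} a partition of unity --- one checks that $\sum_k c_{N,t}(k)\ne1$ already for small even $N$ --- which is the Nyquist-frequency ambiguity; this is exactly why the phase factor $e^{i\pi\frac{N-1}{N}(t-k)}$, or an equivalent restriction of the Fourier support of $f$ strictly below the Nyquist frequency, is indispensable, and the periodicity/band-limitedness conditions of \cite{interpolation} must be the ones that enforce it. Once that theorem is cited in the correct form, only the one-line substitution of Lemma~\ref{lemma1} remains.
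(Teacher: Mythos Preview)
Your proposal is correct and follows exactly the paper's approach: the paper gives no separate proof for this lemma, the argument being entirely contained in the phrase ``Combining the classical interpolation theorem and Lemma~\ref{lemma1}'' in the statement itself. Your two-step plan---cite the trigonometric interpolation theorem for the first equality, then substitute Lemma~\ref{lemma1} termwise for the second---is precisely that combination, and your remarks on the Nyquist phase factor and on extending Lemma~\ref{lemma1} to $k=N-1$ go beyond anything the paper supplies.
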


\section{\label{sec:applications}Applications to Quantum Measurement Interpolation}

Assume we have an $n$-qubit quantum register whose state is the result of encoding a value, following the quantum
phase estimation procedure.
The value could represent the phase of a unitary operator's eigenvalue, as in the original context of quantum phase
estimation, the probability of marked states, as in the quantum amplitude estimation context, encoding function values, etc.

With the notation $N=2^n$, we interpret the encoded value as a real number $t \in [0, N)$.
In some contexts we  may be interested in the value $\frac{t}{N} \in [0, 1)$.
The state of the register will reflect the quantum state in Eq.~\ref{eqn:fejer_state}.

Repeated measurements of the state of the register create a sample from the probability distribution $p_{N, t}$ as in
Eq.~\ref{eqn:fejer_pdf}.
We denote by $q$ the function that maps the outcome $k$, for $0 \le k < N$, corresponding to the computational state
$\ket{k}_n$ to the proportion (normalized count) of measurements of the outcome $k$.

\paragraph{Ratio-Based Estimation}

As discussed in Lemma~\ref{re_lemma}, using the formulas in Corollary~\ref{ratio_corollary}, the ratio of amplitudes
can be used to get an estimate for the value $t$:

\begin{equation}
    \hat{t}_\text{RBE} = \lfloor t \rfloor + \frac{N}{\pi} \arctan\left(\frac{\sin\frac{\pi}{N}}{\cos\frac{\pi}{N} +
    \sqrt{q(\lfloor t \rfloor)/q(\lceil t \rceil)}}\right)
    \label{eqn:ratio_est}
\end{equation}

or

\begin{equation}
    \hat{t}_\text{RBE} = \lceil t \rceil - \frac{N}{\pi} \arctan\left(\frac{\sin\frac{\pi}{N}}{\cos\frac{\pi}{N} +
    \sqrt{q(\lceil t \rceil)/q(\lfloor t \rfloor)}}\right),
    \label{eqn:ratio_est_ceil}
\end{equation}

where we can infer the ceiling and floor of $t$ from the measurements (top two largest values of $q$). We abbreviate
this method by "RBE".

\paragraph{Coin Approximation Estimation}

As discussed in Lemma~\ref{coin_lemma}, the magnitudes of the amplitudes of the floor and ceiling of the value $t$ (i
.e. the amplitudes with the largest magnitudes) can be used as likelihoods for the sides of a coin whose bias is an
estimate for $t - \lfloor t \rfloor$, the decimal part of $t$.

We can use one of many available methods to estimate the bias of this Bernoulli process.
This approach has the benefit of well-understood theory and known confidence/credible intervals.

In our experiments we have used the Bayesian approach that relies on the fact that the Beta distribution is the
conjugate prior of the Bernoulli distribution.
The posterior distribution is the Beta distribution with the square roots of the two largest sampling counts as
parameters.

\paragraph{Interpolation-Based Estimation}

Using the interpolation formula in Lemma~\ref{lemma_interpolation} with a well-behaved function (e.g $x \mapsto \cos
(x\pi)$) we get

\begin{equation*}
    \cos(\hat{t}\pi) \approx \sum_{k = 0}^{N-1} \cos\left(\frac{k}{N}\pi\right) (-1)^{\lfloor t \rfloor - k}\sgn
    (\lfloor t \rfloor - k)\sqrt{q(k)}.
\end{equation*}

Then we can solve for $\hat{t}$.

\section{\label{sec:confidence_interval}Confidence Intervals for Ratio-Based and Coin Approximation Estimators}

For an integer $n > 1$ and a real number $t \in (k, k+1)$, where $N=2^n$ and $0 \le k < N$, consider the state
$\ket{\phi_{n, t}}$ defined in Eq.~\ref{eqn:fejer_state}.
For a positive integer $L$, and a sequence of $L$ measurements of this state, denote by $r$ the ratio of the
normalized measurement frequencies of the states $\ket{k}$ and $\ket{k+1}$, and by $s$ the sum of these frequencies.
Then, according to Eq.~\ref{eqn:ratio_est}, the decimal part of $t$ depends only on $N$ and $r$, and not on $k$,
and is defined by:

\begin{equation}
    \label{eqn:d_definition}
    D_N(r) = \frac{N}{\pi} \arctan\left(\frac{\sin\frac{\pi}{N}}{\cos\frac{\pi}{N} + \sqrt{r}}\right).
\end{equation}

Since the ratio estimator is asymptotically normally distributed ~\cite{meanandvariance2018}, we can use the delta
method ~\cite{statisticalinference} to derive a $100(1-\alpha)$\% confidence interval for the estimator of the
decimal part of $t$:

\begin{equation*}
    \left[ D_N(r) - z_{\alpha/2} \frac{\sigma D_N^{\prime}(r)}{L},
    D_N(r) + z_{\alpha/2} \frac{\sigma D_N^{\prime}(r)}{L} \right]
\end{equation*}

where $\sigma = (1+r)\sqrt{\frac{r}{s}}$, and  $z_{\alpha/2}$ denotes the normal critical value corresponding to the
significance level $\alpha \in (0, 1)$.

Credible intervals for the Coin Approximation Estimator can be computed using the percent point function of the Beta
distribution.
Figure~\ref{fig:delta_beta} compares the radius of the intervals using these two methods.

\begin{center}
    \centering
    \begin{minipage}{.43\textwidth}
        \centering
        \includegraphics[width=.99\linewidth]{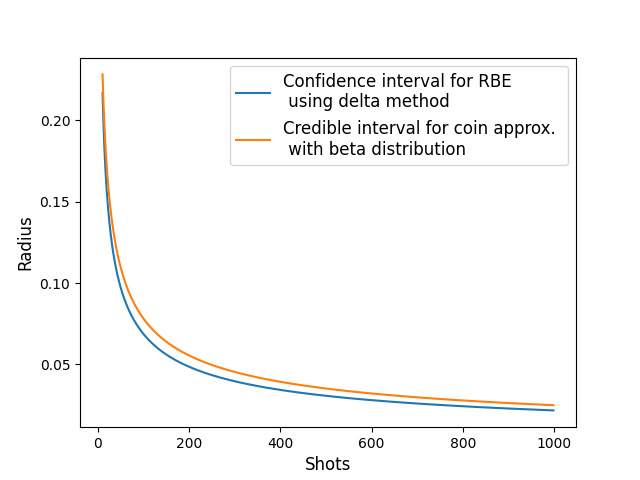}
    \end{minipage}
    \captionof{figure}{The radius of the RBE confidence interval compared to the Coin Approx. credible interval
    using the theoretical measurement frequencies of the highest two measurements given $n = 3, N = 8$ and $t = 4
    .2$, for a given number of shots.}
    \label{fig:delta_beta}
\end{center}

\section{\label{sec:numerical_exp}Numerical Experiments}

In this section we perform experiments for an empirical analysis of the methods described in the previous section on
both quantum simulators and quantum computers.
We use IBM Quantum services to perform the experiments ~\cite{IBMQServices}.
Details about the systems and configurations can be found in Appendix~\ref{sec:hardware}.

Using the value encoding algorithm described in ~\cite{interpolation} to prepare the state $\ket{\phi_{n, t}}$ as
defined in Eq.~\ref{eqn:fejer_state} using $n$-qubits and a given value $t \in [0, 2^n)$.
Each circuit is run with $20,000$ shots.
The memory parameter in IBM Quantum services~\cite{Qiskit} allows for the measurement at each shot to be retrieved.
We refer to the measuremnt at a given shot as a sample.

Given a round of samples, we estimate the parameter $t$ using the MLE method discussed in Lemma~\ref{mle_lemma} (Eq
.~\ref{eqn:mle_equation}), as well as the Ratio-Based Estimation (RBE) and Coin Approximation estimation methods
introduced Section~\ref{sec:applications}. Interpolation-Based Estimation is not included because it did not perform
well in experiments.

\subsection{\label{subsec:sim}Quantum Simulator Experiments}

The estimations from experiments performed on a quantum simulator backend were both highly accurate and highly
precise.
Figure~\ref{fig:sim_estimations_t6} visualizes 20 estimations for values of $t$ at increments of 0.1 in between 6.1
and 6.9 using three different estimation methods: RBE, Coin Approximation and MLE. The average error of the 20
estimates is low for all three methods, and it is difficult to determine if any method gives better estimates.

\begin{figure*}[ht]
    \centering
    \begin{minipage}{.4\textwidth}
        \centering
        \includegraphics[width=1.0\linewidth, trim = 20pt 20pt 20pt 20pt]{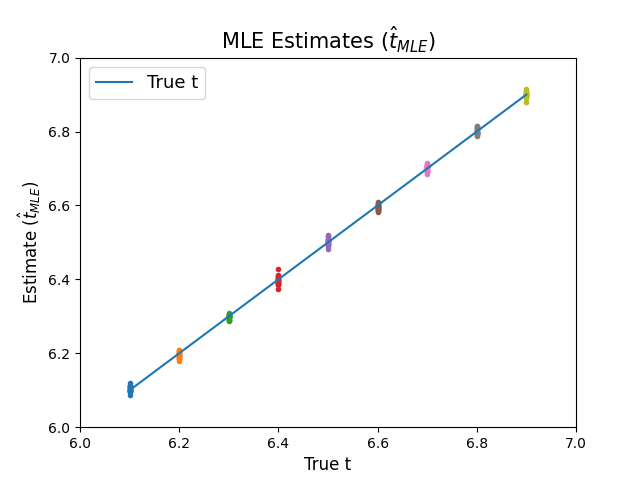}
    \end{minipage}
    \begin{minipage}{.4\textwidth}
        \centering
        \includegraphics[width=1.0\linewidth, trim = 20pt 20pt 20pt 20pt]{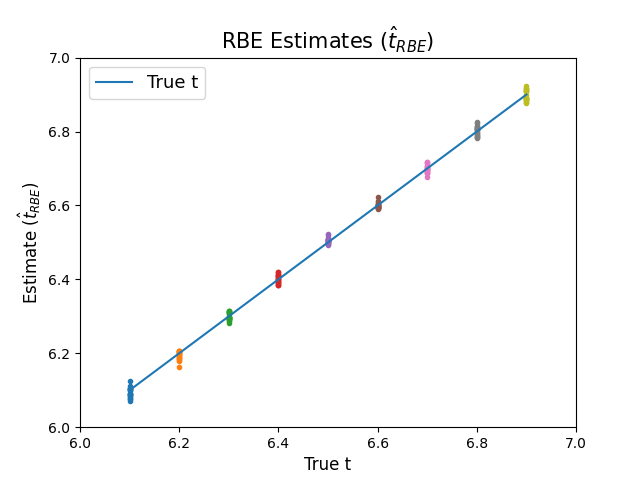}
    \end{minipage}

    \bigskip
    \begin{minipage}{.4\textwidth}
        \centering
        \includegraphics[width=1.0\linewidth, trim = 20pt 20pt 20pt 20pt]{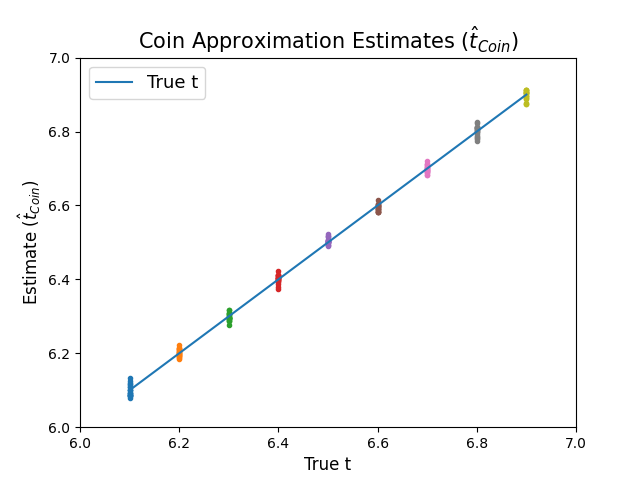}
    \end{minipage}
    \begin{minipage}{.4\textwidth}
        \centering
        \includegraphics[width=1.0\linewidth, trim = 20pt 20pt 20pt 20pt]{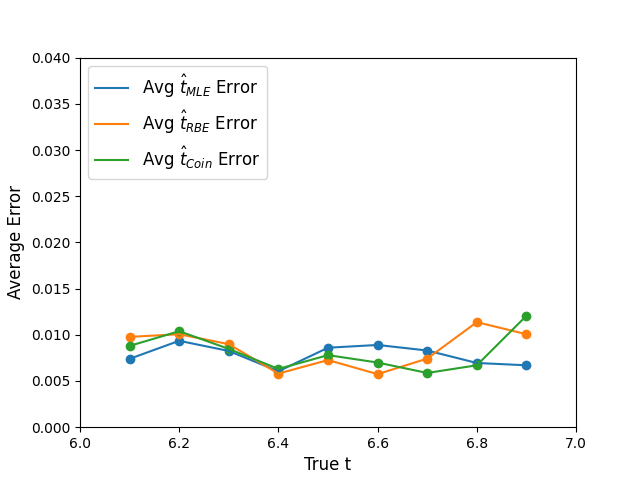}
    \end{minipage}

    \captionof{figure}{Top Left: MLE estimates for $t$ computed using 1,000 samples from quantum simulator
    experiments with $n = 3$ qubits and values of $t$ at increments of 0.1 in between 6.1
    and 6.9. Top Right: RBE estimates for $t$ computed using 1,000 samples from quantum simulator experiments with
        $n = 3$ qubits and values of $t$ at increments of 0.1 in between 6.1 and 6.9. Bottom Left: Coin
        Approximation estimates for $t$ computed using 1,000 samples from quantum simulator experiments with $n = 3$
        qubits and values of $t$ at increments of 0.1 in between 6.1 and 6.9. Bottom Right: The average error of
        estimates computed using samples from experiments on a quantum simulator.}
    \label{fig:sim_estimations_t6}
\end{figure*}

\subsection{\label{subsec:real_experiments}Quantum Hardware Experiments}

Figures~\ref{fig:zoom_mle_re} and~\ref{fig:zoom_mle_coin} visualize the measurement frequencies from experiments
using 3 qubits on \textit{ibm\_perth} and estimates for various values of $t$. Each estimate is computed using a set
of $1,000$ samples, and we repeat the process for 20 rounds.
Figure~\ref{fig:zoom_mle_re} compares estimates computed using MLE and RBE methods and
Figure~\ref{fig:zoom_mle_coin} compares estimates computed using MLE and Coin Approximation Estimation.

\begin{center}
    \begin{tabular}{ccc}
        \begin{minipage}{.32\textwidth}
            \centering
            \includegraphics[width=1.0\linewidth, trim = 10pt 10pt 10pt 10pt]{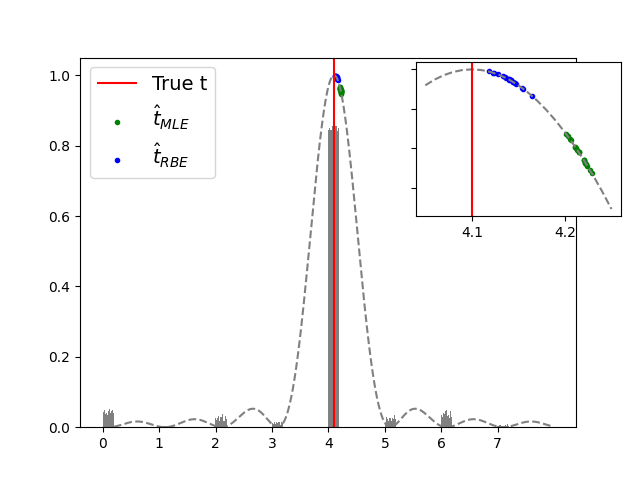}
        \end{minipage}
        \begin{minipage}{.32\textwidth}
            \centering
            \includegraphics[width=1.0\linewidth, trim = 10pt 10pt 10pt 10pt]{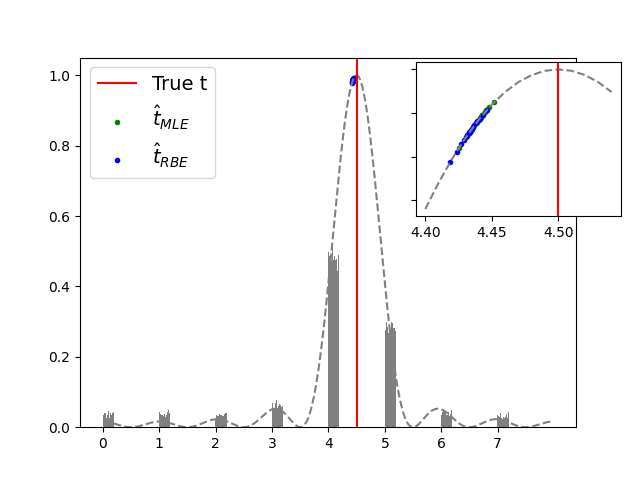}
        \end{minipage}
        \begin{minipage}{.32\textwidth}
            \centering
            \includegraphics[width=1.0\linewidth, trim = 10pt 10pt 10pt 10pt]{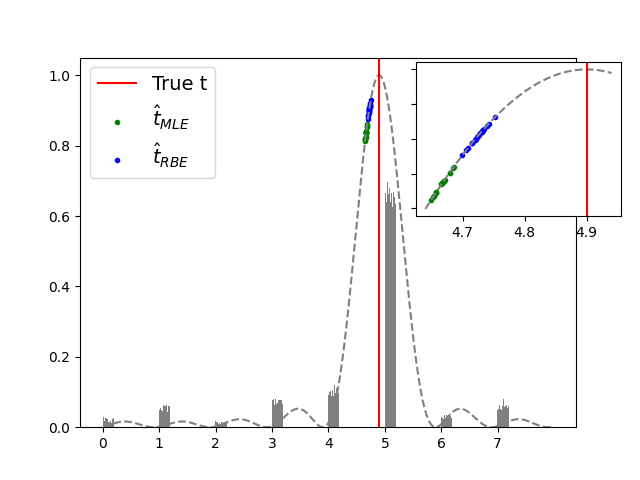}
        \end{minipage}
    \end{tabular}
    \captionof{figure}{The histograms of normalized measurement probabilities for each set of 1,000 samples from
    experiments with $n = 3$ qubits and $t = 4.1$, $t = 4.5$ and $t = 4.9$ on \textit{ibm\_perth}, compared to the
    expected probability distribution defined in Eq.~\ref{eqn:fejer_pdf}. For each set of 1,000 samples, we
    compute estimates using MLE and RBE.}
    \label{fig:zoom_mle_re}
\end{center}

\begin{center}
    \begin{tabular}{ccc}
        \begin{minipage}{.32\textwidth}
            \centering
            \includegraphics[width=1.0\linewidth, trim = 10pt 10pt 10pt 10pt]{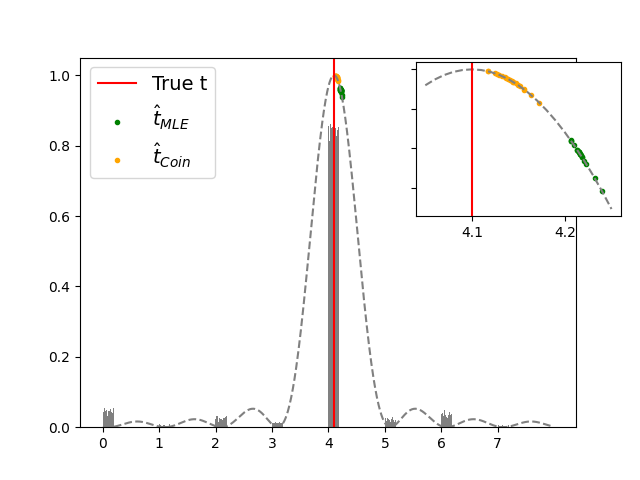}
        \end{minipage}
        \begin{minipage}{.32\textwidth}
            \centering
            \includegraphics[width=1.0\linewidth, trim = 10pt 10pt 10pt 10pt]{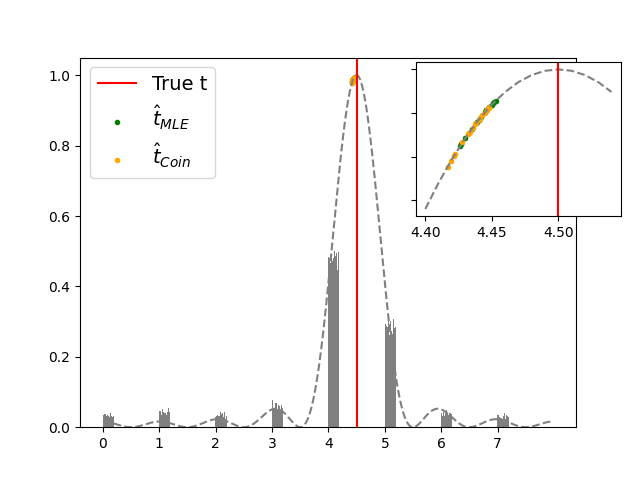}
        \end{minipage}
        \begin{minipage}{.32\textwidth}
            \centering
            \includegraphics[width=1.0\linewidth, trim = 10pt 10pt 10pt 10pt]{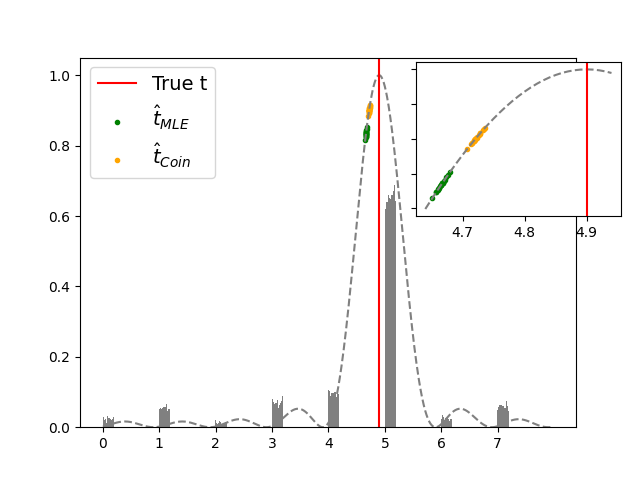}
        \end{minipage}
    \end{tabular}
    \captionof{figure}{The histograms of normalized measurement probabilities for each set of 1,000 samples from
    experiments with $n = 3$ qubits and $t = 4.1$, $t = 4.5$ and $t = 4.9$ on \textit{ibm\_perth}, compared to the
    expected probability distribution defined in Eq.~\ref{eqn:fejer_pdf}. For each set of 1,000 samples, we
    compute estimates using MLE and Coin Approximation Estimation.}
    \label{fig:zoom_mle_coin}
\end{center}

As mentioned in Section~\ref{sec:prelim}, results from experiments on real quantum hardware in the NISQ era show much
more divergence from the true likelihood function than results from quantum simulation.
Figure~\ref{fig:perth_estimations_t6} visualizes results from 20 rounds with $1,000$ samples each for values of $t$ at
increments of 0.1 in between 6.1 and 6.9 using a 3 qubits on \textit{ibm\_perth}.

\begin{figure*}[ht]
    \centering
    \begin{minipage}{.4\textwidth}
        \centering
        \includegraphics[width=1.0\linewidth, trim = 20pt 20pt 20pt 20pt]{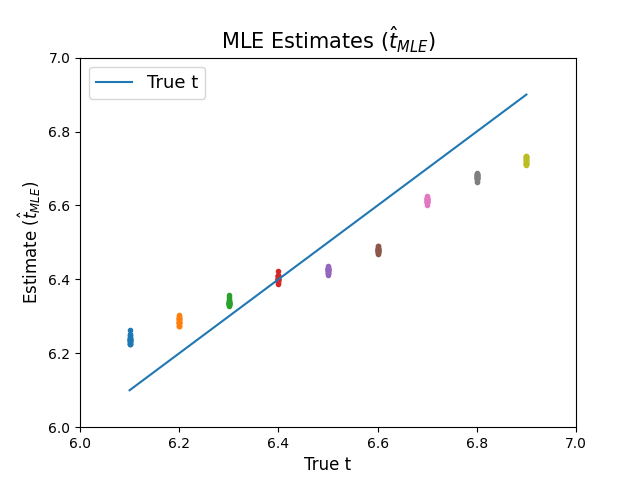}
    \end{minipage}
    \begin{minipage}{.4\textwidth}
        \centering
        \includegraphics[width=1.0\linewidth, trim = 20pt 20pt 20pt 20pt]{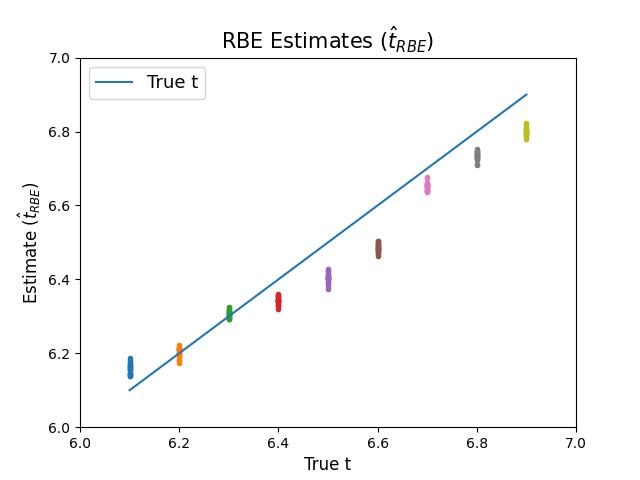}
    \end{minipage}

    \bigskip
    \begin{minipage}{.4\textwidth}
        \centering
        \includegraphics[width=1.0\linewidth, trim = 20pt 20pt 20pt 20pt]{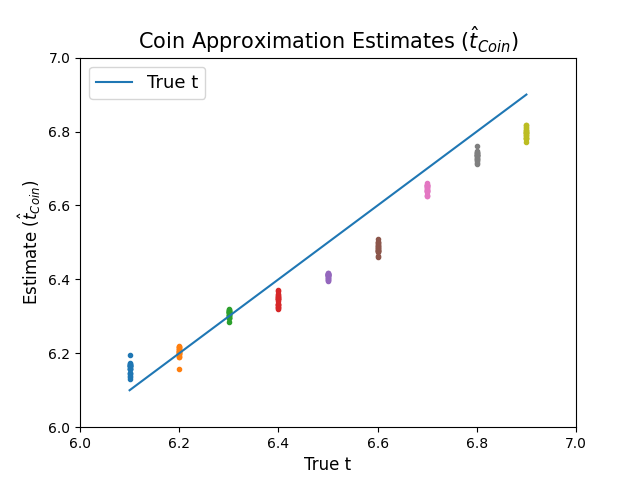}
    \end{minipage}
    \begin{minipage}{.4\textwidth}
        \centering
        \includegraphics[width=1.0\linewidth, trim = 20pt 20pt 20pt 20pt]{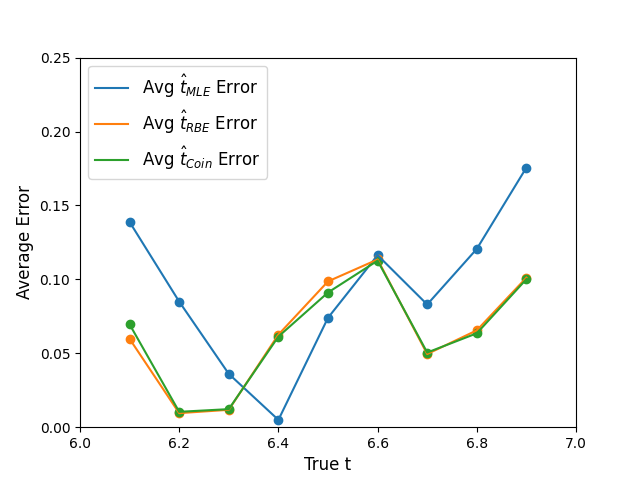}
    \end{minipage}

    \captionof{figure}{Top Left: MLE estimates for $t$ computed using 1,000 samples from experiments on
    \textit{ibm\_perth} with $n = 3$ qubits and values of $t$ at increments of 0.1 in between 6.1
    and 6.9. Top Right: RBE estimates for $t$ computed using 1,000 samples from experiments on
    \textit{ibm\_perth} with $n = 3$ qubits and values of $t$ at increments of 0.1 in between 6.1
    and 6.9.. Bottom Left: Coin Approximation estimates for $t$ computed using 1,000 samples from experiments on
    \textit{ibm\_perth} with $n = 3$ qubits and values of $t$ at increments of 0.1 in between 6.1 and 6.9. Bottom
    Right: The average error of estimates computed using samples from experiments on a quantum simulator.}
    \label{fig:perth_estimations_t6}
\end{figure*}

The experiment was repeated for various intervals of $t$ using 3 and 4 qubit registers on \textit{ibm\_perth}.
Results from these additional experiments are included in Appendix~\ref{sec:add_experiments}.

\section{\label{sec:conclusions}Concluding Remarks}

In this paper we explore a few methods for estimating the decimal part of a number encoded through the Phase or
Amplitude Estimation Algorithm.

The full interpolation method and the Maximum Likelihood Estimation method use sampling counts for all possible
outcomes, and give good theoretical results, but it turns out that they are sensitive to noise when implemented
on quantum hardware that is currently available, without additional error correction.
Methods that rely on only the top two counts seem to be less sensitive to such noise, and they are also simple to use,
providing closed forms for the decimal of the encoded value for the given counts.
Confidence/credible intervals for these methods are also relatively simple to obtain.

\acknowledgements

The views expressed in this article are those of the authors and do not represent the views of Wells Fargo. This
article is for informational purposes only. Nothing contained in this article should be construed as investment advice.
Wells Fargo makes no express or implied warranties and expressly disclaims all legal, tax, and accounting implications
related to this article.\\

We acknowledge the use of IBM Quantum services for this work. The views expressed are those of the authors, and do
not reflect the official policy or position of IBM or the IBM Quantum team.

\bibliographystyle{unsrtnat}
\bibliography{main}

\begin{thebibliography}{11}
\providecommand{\natexlab}[1]{#1}
\providecommand{\url}[1]{\texttt{#1}}
\expandafter\ifx\csname urlstyle\endcsname\relax
  \providecommand{\doi}[1]{doi: #1}\else
  \providecommand{\doi}{doi: \begingroup \urlstyle{rm}\Url}\fi

\bibitem[uio()]{uiocourse}
Chapter 3 from the course mat-inf2360.
\newblock URL
  \url{https://www.uio.no/studier/emner/matnat/math/nedlagte-emner/MAT-INF2360/}.

\bibitem[Stefanski et~al.(2022)Stefanski, Markov, and Gonciulea]{interpolation}
Charlee Stefanski, Vanio Markov, and Constantin Gonciulea.
\newblock Quantum amplitude interpolation, 2022.
\newblock URL \url{https://arxiv.org/abs/2203.08758}.

\bibitem[Hoffman(2007)]{Hoffman2007}
Kenneth Hoffman.
\newblock \emph{{Banach spaces of analytic functions}}.
\newblock Dover, Mineola, NY, 2007.

\bibitem[Janson(2010)]{Janson2010}
Svante Janson.
\newblock Moments of gamma type and the brownian supremum process area.
\newblock \emph{Probab. Surveys}, 7:\penalty0 1--52, 2010.
\newblock \doi{10.1214/10-PS160}.
\newblock URL \url{https://doi.org/10.1214/10-PS160}.

\bibitem[Grinko et~al.(2021)Grinko, Gacon, Zoufal, and Woerner]{IQAE}
Dmitry Grinko, Julien Gacon, Christa Zoufal, and Stefan Woerner.
\newblock Iterative quantum amplitude estimation.
\newblock \emph{npj Quantum Information}, 7\penalty0 (1), mar 2021.
\newblock \doi{10.1038/s41534-021-00379-1}.
\newblock URL \url{https://doi.org/10.1038%2Fs41534-021-00379-1}.

\bibitem[Bracewell(2000)]{sinc}
Ronald N. (Ronald~Newbold) Bracewell.
\newblock \emph{The Fourier transform and its applications}, chapter The
  Filtering or Interpolating Function, sincx.
\newblock McGraw-Hill series in electrical and computer engineering. Circuits
  and systems. McGraw Hill, Boston, 3rd ed. edition, 2000.
\newblock ISBN 0073039381.

\bibitem[Markov et~al.(2022)Markov, Stefanski, Rao, and
  Gonciulea]{inner_product}
Vanio Markov, Charlee Stefanski, Abhijit Rao, and Constantin Gonciulea.
\newblock A generalized quantum inner product and applications to financial
  engineering, 2022.
\newblock URL \url{https://arxiv.org/abs/2201.09845}.

\bibitem[Duris et~al.(201)Duris, Gazdarica, Gazdaricova, Striskova, Budis,
  Turna, and Szemes]{meanandvariance2018}
Frantisek Duris, Juraj Gazdarica, Iveta Gazdaricova, Lucia Striskova, Jaroslav
  Budis, Jan Turna, and Tomas Szemes.
\newblock Mean and variance of ratios of proportions from categories of a
  multinomial distribution.
\newblock \emph{Journal of Statistical Distributions and Applications}, 5, 2,
  January 201.
\newblock \doi{https://doi.org/10.1186/s40488-018-0083-x}.

\bibitem[Casella and Berger(2002)]{statisticalinference}
Goerge Casella and Roger Berger.
\newblock \emph{Statistical Inference}.
\newblock Thomson Learning, 2002.

\bibitem[IBM(2021)]{IBMQServices}
{IBM Quantum}, 2021.
\newblock URL \url{https://quantum-computing.ibm.com/}.

\bibitem[Qis(2021)]{Qiskit}
Qiskit: An open-source framework for quantum computing, 2021.

\end{thebibliography}

\renewcommand\floatpagefraction{0.9}
\appendix

\section{\label{sec:other_properties}Addtional Properties of the Discrete Sinc Quantum State}

The following additional identities can be useful in further understanding the discrete sinc quantum state and associated
Fej\'er distribution.

\begin{lemma}
    Given a non-integer $t \in (0, N)$ and an integer $0 \le k < N$ we have

    \begin{equation*}
        \begin{split}
            \tan((t-k)\frac{\pi}{N}) &= \sqrt{\frac{p_{N, t}\left((k + \frac{N}{2}) \mod N\right)}{p_{N, t}(k)}} \\
        \end{split}
    \end{equation*}

    \begin{equation*}
        \sum_{k = 0}^{N-1} \sin((t-k)\frac{\pi}{N}) c_{N, t}(k) = 0
    \end{equation*}

    \begin{equation*}
        \sum_{k = 0}^{N-1} \cos((t-k)\frac{\pi}{N}) c_{N, t}(k) = 1
    \end{equation*}

    \begin{equation*}
        \sum_{k=0}^{N-1} (-1)^k \sin((t-k)\frac{\pi}{N}) c_{N, t}(k) = \sin(t\pi)
    \end{equation*}

    \begin{equation*}
        \sum_{k = 0}^{N-1} e^{i \pi \frac{N-1}{N} k} c_{N,t}(k)  =  e^{i \pi \frac{N-1}{N} t}
    \end{equation*}

    \begin{equation*}
        \sum_{k=0}^{N-1} (-1)^k \cos(k\frac{\pi}{N}) c_{N, t}(k) = \cos(t\frac{N-1}{N}\pi)
    \end{equation*}

    \begin{equation*}
        \sum_{k=0}^{N-1} (-1)^k \sin(k\frac{\pi}{N}) c_{N, t}(k) = -\sin(t\frac{N-1}{N}\pi)
    \end{equation*}

    \begin{equation*}
        \frac{1}{N}\sum_{k=0}^{N-1} \cot((t-k)\frac{\pi}{N}) = \cot(t\pi)
    \end{equation*}
\end{lemma}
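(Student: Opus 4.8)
The plan is to push every line of the lemma through the single closed form $c_{N,t}(k)=\dfrac{(-1)^{k}\sin(t\pi)}{N\sin\theta_{k}}$, where I write $\theta_{k}:=(t-k)\tfrac{\pi}{N}$; this is immediate from the definition of $c_{N,t}$ together with $\sin((t-k)\pi)=(-1)^{k}\sin(t\pi)$, and it also gives $p_{N,t}(k)=c_{N,t}(k)^{2}=\dfrac{\sin^{2}(t\pi)}{N^{2}\sin^{2}\theta_{k}}$ (cf.\ Eq.~\ref{eqn:fejer_pdf}). Two elementary facts are used throughout: since $N=2^{n}$ is even, $\sum_{k=0}^{N-1}(-1)^{k}=0$; and $\theta_{(k+N/2)\bmod N}=\theta_{k}\mp\tfrac{\pi}{2}$, so $\sin^{2}\theta_{(k+N/2)\bmod N}=\cos^{2}\theta_{k}$, while $N/2$ being an integer makes $\sin^{2}((t-k)\pi)$ invariant under $k\mapsto(k+N/2)\bmod N$.

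First I would clear the identities that need nothing more than this. Substituting the closed form into $\sum_{k}\sin\theta_{k}\,c_{N,t}(k)$ cancels $\sin\theta_{k}$ and leaves $\tfrac{\sin(t\pi)}{N}\sum_{k}(-1)^{k}=0$; the same substitution in $\sum_{k}(-1)^{k}\sin\theta_{k}\,c_{N,t}(k)$ leaves $\tfrac{\sin(t\pi)}{N}\sum_{k}1=\sin(t\pi)$. For the opening $\tan$ identity I would compute $\dfrac{p_{N,t}((k+N/2)\bmod N)}{p_{N,t}(k)}=\dfrac{\sin^{2}\theta_{k}}{\cos^{2}\theta_{k}}=\tan^{2}\theta_{k}$ from the two invariances above, and then take square roots, settling the sign with Lemma~\ref{lemma1}. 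For the exponential identities I would start from Eq.~\ref{eqn:identity_1}: its complex conjugate, rearranged (all $c_{N,t}(k)$ and $t$ are real), is exactly $\sum_{k}e^{i\pi\frac{N-1}{N}k}c_{N,t}(k)=e^{i\pi\frac{N-1}{N}t}$; conjugating back and using $e^{-i\pi\frac{N-1}{N}k}=(-1)^{k}e^{ik\pi/N}$ turns this into $\sum_{k}(-1)^{k}e^{ik\pi/N}c_{N,t}(k)=e^{-i\pi\frac{N-1}{N}t}$, whose real and imaginary parts are the two $\cos(k\tfrac{\pi}{N})$ and $\sin(k\tfrac{\pi}{N})$ identities.

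Next come the two identities that reduce to a cotangent sum. Substituting the closed form gives $\sum_{k}\cos\theta_{k}\,c_{N,t}(k)=\tfrac{\sin(t\pi)}{N}\sum_{k}(-1)^{k}\cot\theta_{k}$, while $\tfrac1N\sum_{k}\cot\theta_{k}$ is precisely the content of the last line. For the latter: expanding $e^{i\pi\frac{N-1}{N}(t-k)}=e^{i\pi t}(-1)^{k}e^{-i\theta_{k}}$ in Eq.~\ref{eqn:identity_1} and applying the closed form reduces it to $\tfrac{\sin(t\pi)}{N}\sum_{k}(\cot\theta_{k}-i)=e^{-i\pi t}$, and the real part is exactly $\tfrac1N\sum_{k}\cot\theta_{k}=\cot(t\pi)$. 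For the former I would invoke the classical finite identity $\sum_{k=0}^{N-1}(-1)^{k}\cot\!\big((t-k)\tfrac{\pi}{N}\big)=\dfrac{N}{\sin(t\pi)}$, proved either by induction on $n$ (pair $k$ with $k+N/2$ and use $\cot a-\tan a=2\cot 2a$, so the paired sum collapses to twice the $N/2$ instance, with base case $N=2$) or by logarithmic differentiation of the product formula $\prod_{j=0}^{M-1}\sin(v+j\tfrac{\pi}{M})=\sin(Mv)/2^{M-1}$ applied to the even- and odd-indexed sub-products with $M=N/2$; substituting it gives $\sum_{k}\cos\theta_{k}\,c_{N,t}(k)=1$.

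The hard part — really the only step that is more than mechanical — is that last $(-1)^{k}$-weighted cotangent sum; everything else is one or two lines of Eq.~\ref{eqn:identity_1}, complex conjugation, and the closed form for $c_{N,t}(k)$, together with $N$ being even. Along the way I would keep the ``free'' companion identities as sanity checks: for instance $\sum_{k}(-1)^{k}e^{i\theta_{k}}c_{N,t}(k)=e^{i\pi t}$ splits into $\sum_{k}(-1)^{k}\sin\theta_{k}\,c_{N,t}(k)=\sin(t\pi)$ (imaginary part) and $\sum_{k}(-1)^{k}\cos\theta_{k}\,c_{N,t}(k)=\cos(t\pi)$ (real part), the latter matching the equivalent form of the MLE identity recorded after Lemma~\ref{mle_lemma}, which is a useful cross-check that the sign conventions are consistent.
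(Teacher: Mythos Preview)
The paper states this lemma in Appendix~\ref{sec:other_properties} \emph{without proof}, so there is no argument in the paper to compare yours against. Your derivation is correct and efficient: the closed form $c_{N,t}(k)=\dfrac{(-1)^{k}\sin(t\pi)}{N\sin\theta_{k}}$ together with Eq.~\ref{eqn:identity_1} (and its complex conjugate) disposes of all eight identities, and the only non-mechanical step --- the alternating cotangent sum $\sum_{k}(-1)^{k}\cot\theta_{k}=N/\sin(t\pi)$ needed for $\sum_{k}\cos\theta_{k}\,c_{N,t}(k)=1$ --- is handled cleanly by your halving induction (pair $k$ with $k+N/2$, use $\cot a-\tan a=2\cot 2a$; this works because $N=2^{n}$ makes $N/2$ even for $n\ge 2$, with $N=2$ as base case).

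One small caveat worth making explicit: the first identity as printed in the paper cannot hold with a literal square root for all $0\le k<N$, since $\tan\theta_{k}$ changes sign over that range while the right-hand side is nonnegative. Your computation correctly yields $\tan^{2}\theta_{k}$ under the radical; the appeal to Lemma~\ref{lemma1} to ``settle the sign'' is the right instinct, but it would be cleaner to state the conclusion as $|\tan\theta_{k}|$ (or to restrict to $k=\lfloor t\rfloor$, where $\theta_{k}\in(0,\pi/N)$ and the sign is positive). This is a defect of the stated identity rather than of your argument.
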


\section{\label{sec:stats}Estimator Analysis}

For a positive integer $L$, and a sequence of $L$ measurements of the state $\ket{\phi_{n, t}}$ defined in Eq
.~\ref{eqn:fejer_state} (where $n$ is a positive integer, $N = 2^n$ and $t \in [0, N)$), we denote by $r$ the ratio
of the probabilities of the states $\ket{k}$ and$\ket{k+1}$.
We denote by $\hat{r}$ the estimator of this ratio.
Following ~\cite{meanandvariance2018}, we analyze the expectation and variance of $\hat{r}$.

The expectation of $\hat{r}$ is

\begin{equation}
    \label{eqn:ratio_mean}
    \mu_{\hat{r}} = \frac{p_{N, t}(k)}{p_{N, t}(k+1)} \left( 1 + \frac{1}{L p_{N, t}(k+1)} \right),
\end{equation}

and its variance is

\begin{equation}
    \label{eqn:ratio_variance}
    \sigma_{\hat{r}}^2 = \frac{1}{L} \left(\frac{p_{N, t}(k)}{p_{N, t}(k+1)}\right)^2\left(\frac{1}{p_{N, t}(k)} +
    \frac{1}{p_{N, t}(k+1)}\right)
\end{equation}

where $p_{N, t}$ is defined in Eq.~\ref{eqn:fejer_pdf} and $0 \le k < N$.

Figure~\ref{fig:ratio_plot} shows samples of the expectation and variance compared to the theoretical values.

\begin{center}
    \begin{tabular}{cccc}
        \begin{minipage}{.4\textwidth}
            \centering
            \includegraphics[width=.9\linewidth]{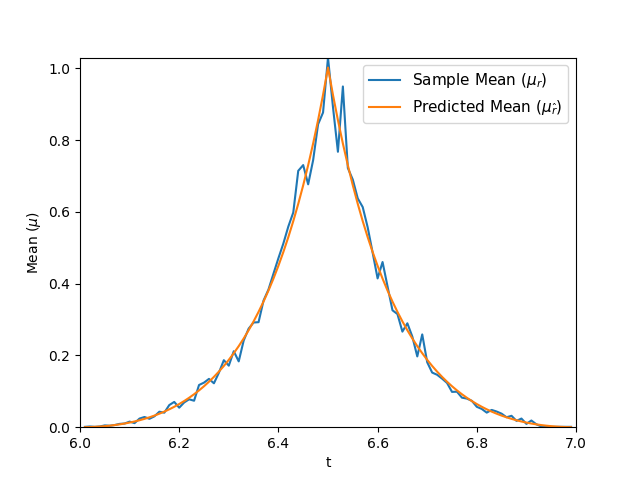}
        \end{minipage}
        \begin{minipage}{.4\textwidth}
            \centering
            \includegraphics[width=.9\linewidth]{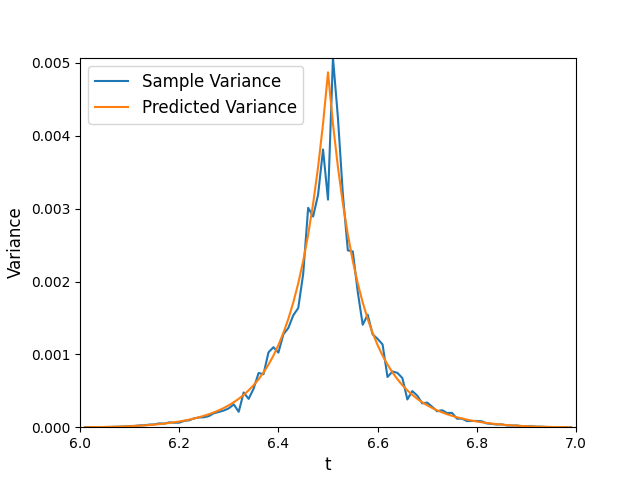}
        \end{minipage}
    \end{tabular}
\captionof{figure}{Left: The expectation of $\hat{r}$ defined in Eq.~\ref{eqn:ratio_mean} compared to measured values
of $r$ from a quantum simulator for $L = 1,000$ samples, $n = 3$ qubits and values of $t \in (6, 7)$. Right: The
predicted variance of $\hat{r}$ defined in Eq.~\ref{eqn:ratio_variance} compared to the variance of measured values
of $r$ from 100 rounds using a quantum simulator for $L = 1,000$ samples, $n = 3$ qubits and values of $t \in (6, 7)$.
In both figures, we use the opposite ratio for values of $t$ with a decimal part greater than 0.5 to keep the ratio
less than 1.}
    \label{fig:ratio_plot}
\end{center}

Using the Taylor expansion of the function $D_N$ defined in Eq.~\ref{eqn:d_definition}, we can compute
the expectation and variance of the estimator.

The expectation is

\begin{equation}
    \label{eqn:rbe_mean}
    E (\hat{t}_{\text{RBE}}) \approx D_N^{\prime}(\mu_{\hat{r}}) + \frac{D_N^{\prime\prime}(\mu_{\hat{r}})}{2}
    \sigma_{\hat{r}}^2,
\end{equation}

and the variance is

\begin{equation}
    \label{eqn:rbe_variance}
    \text{Var}(\hat{t}_{\text{RBE}}) \approx  D_N^{\prime}(\mu_{\hat{r}})^2 \sigma_{\hat{r}}^2 - \frac{1}{4}D_N^{\prime\prime}
    (\mu_{\hat{r}})^2 \sigma_{\hat{r}}^2
\end{equation}

where $\hat{t}_{\text{RBE}}$ is defined in Eq.~\ref{eqn:ratio_est}.

\begin{center}
    \begin{tabular}{cccc}
        \begin{minipage}{.4\textwidth}
            \centering
            \includegraphics[width=.9\linewidth]{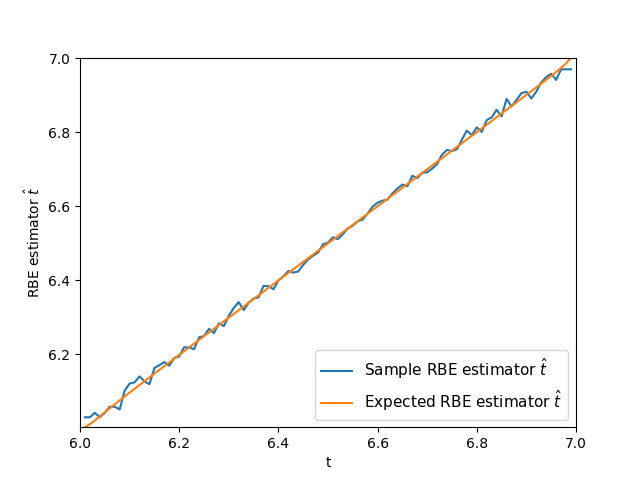}
        \end{minipage}
        \begin{minipage}{.4\textwidth}
            \centering
            \includegraphics[width=.9\linewidth]{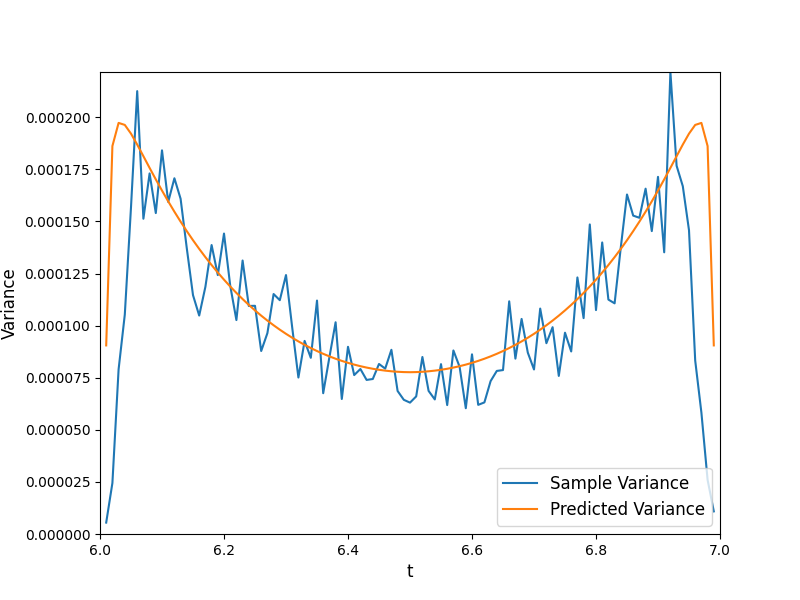}
        \end{minipage}
    \end{tabular}
    \captionof{figure}{Left: The expectation of $\hat{t}_{\text{RBE}}$ defined in Eq.~\ref{eqn:rbe_mean}, with
        $\mu_{\hat{r}}$ defined in Eq.~\ref{eqn:ratio_mean}, $\sigma_{\hat{r}}^2$ defined in Eq
        .~\ref{eqn:ratio_variance} and $L = 1000$, compared to $\hat{t}_{\text{RBE}}$ computed using 1,000 samples
        from a quantum simulator with $n = 3$, and values of $t \in (6, 7)$. Right: The variance of
        $\hat{t}_{\text{RBE}}$ defined in Eq.~\ref{eqn:rbe_variance}, with $\mu_{\hat{r}}$
         defined in Eq.~\ref{eqn:ratio_mean}, $\sigma_{\hat{r}}^2$ defined in Eq.~\ref{eqn:ratio_variance} and
        $L = 1000$, compared to the variance of 100 $\hat{t}_{\text{RBE}}$ values computed using 1,000 samples from a
        quantum simulator.}
    \label{fig:re_plot}
\end{center}

\section{\label{sec:hardware}Hardware Details}

These experiments were run on the IBM Quantum system \textit{ibm\_perth}, which a IBM Quantum Falcon processor with 7
qubits.
The error map at the time of the experiments is shown in Figure~\ref{fig:perth_map}. Each circuit in the
experiment was measured with $20,000$ shots.
Relevant calibration data is included in Figure~\ref{fig:perth_map} and Table~\ref{tab:perth_table}.

\begin{figure}
    \centering
    \begin{minipage}{.35\textwidth}
        \centering
        \includegraphics[width=.95\linewidth]{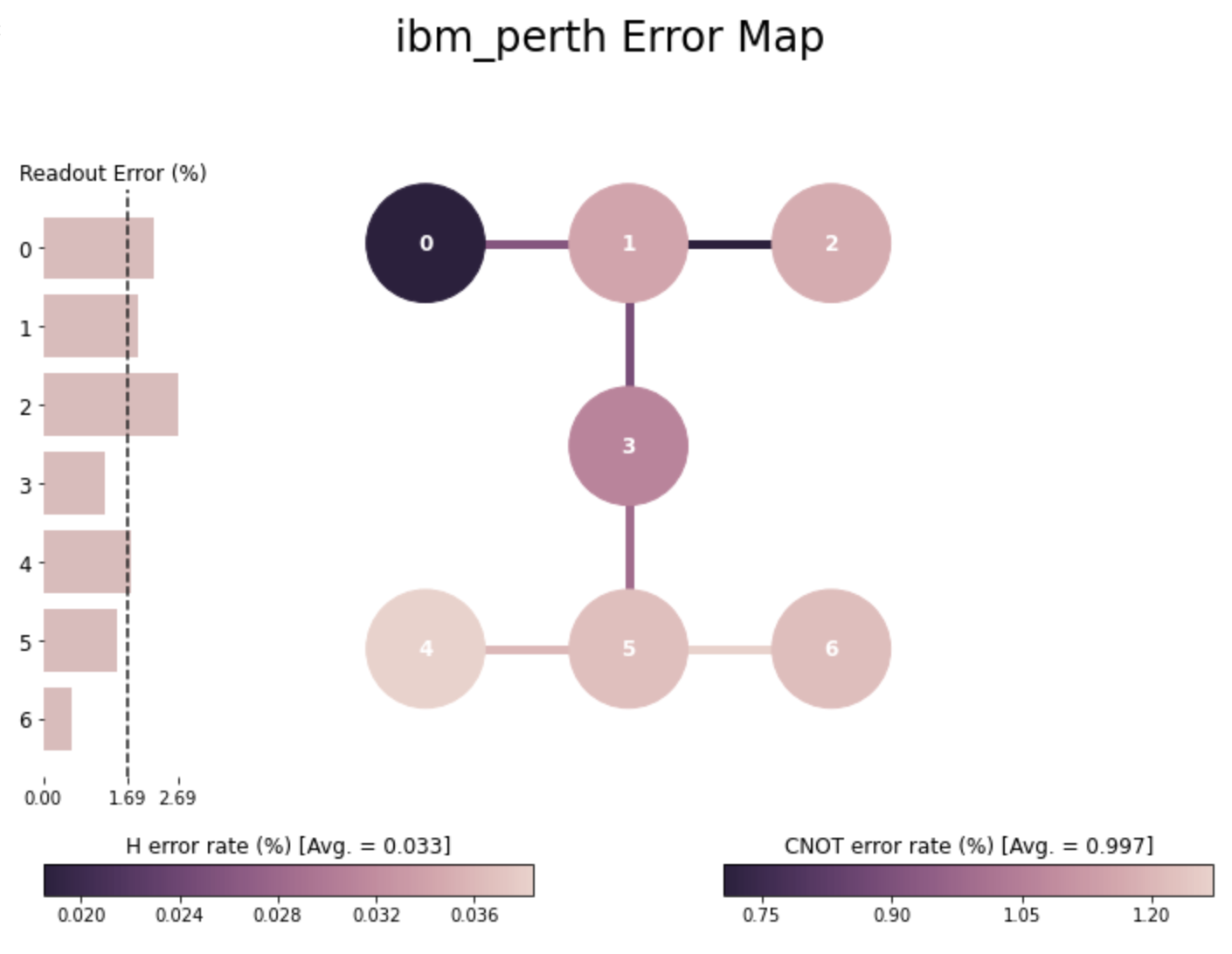}
    \end{minipage}
    \caption{Error map of \textit{ibm\_perth} generated at the time of the experiments. The three qubit circuits were
    run on qubits ${1, 3, 5}$. The four qubit circuits were run on ${3, 4, 5, 6}$. The six qubit circuits were run
    on ${0, 1, 3, 4, 5, 6}$.}
    \label{fig:perth_map}
\end{figure}

\begin{table}
    \centering\small
    \begin{tabular}{c c c | c c c} \hline\hline
            Qubit pair & $\quad$Error (\%) & & Qubit & $\quad T_1~(\mu{\rm s})$ & $\quad T_2~(\mu{\rm s})$ \\ \hline
            (0, 1) & 0.918 & & Q0 & 147 & 85\\
            (1, 3) & 0.890 & & Q1 & 218 & 57  \\
            (3, 5) & 0.990 & & Q3 & 130 & 127 \\
            (4, 5) & 1.203 & & Q4 & 155 & 165 \\
            (5, 6) & 1.271 & & Q6 & 199 & 155 \\
            Average & 1.05$\pm$0.17 & &  &  \\ \hline
    \end{tabular}
    \caption{Calibration data from \textit{ibm\_perth} from calibration before experiments.}
    \label{tab:perth_table}
\end{table}

\newpage
\section{\label{sec:add_experiments}Additional Quantum Hardware Experiments}

Figures~\ref{fig:perth_n3} and ~\ref{fig:perth_n4} visualize the results of experiments run on \textit{ibm\_perth}
with 3 and 4 qubits, respectively, for more intervals of $t$.

\begin{figure*}[h!]
    \centering
    \begin{minipage}{.4\textwidth}
        \centering
        \includegraphics[width=.85\linewidth, trim = 20pt 20pt 20pt 20pt]{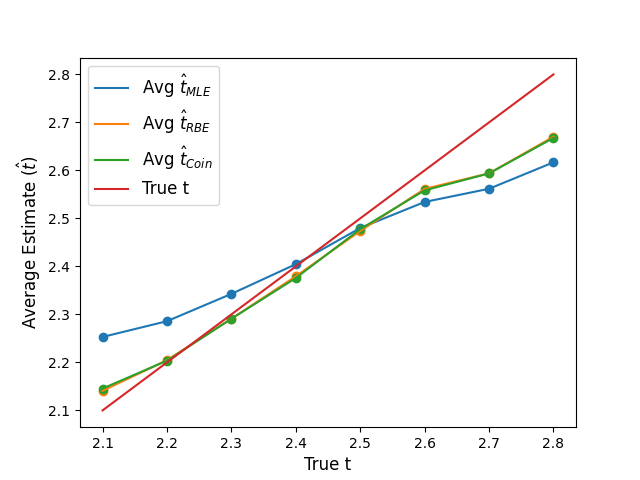}
    \end{minipage}
    \begin{minipage}{.4\textwidth}
        \centering
        \includegraphics[width=.85\linewidth, trim = 20pt 20pt 20pt 20pt]{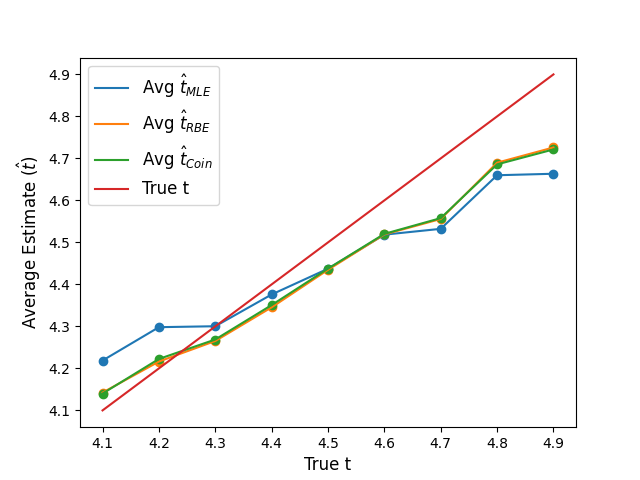}
    \end{minipage}

    \bigskip
    \begin{minipage}{.4\textwidth}
        \centering
        \includegraphics[width=.85\linewidth, trim = 20pt 20pt 20pt 20pt]{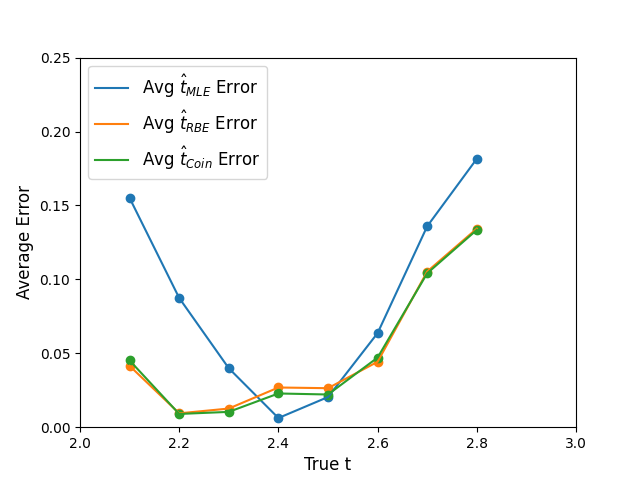}
    \end{minipage}
    \begin{minipage}{.4\textwidth}
        \centering
        \includegraphics[width=.85\linewidth, trim = 20pt 20pt 20pt 20pt]{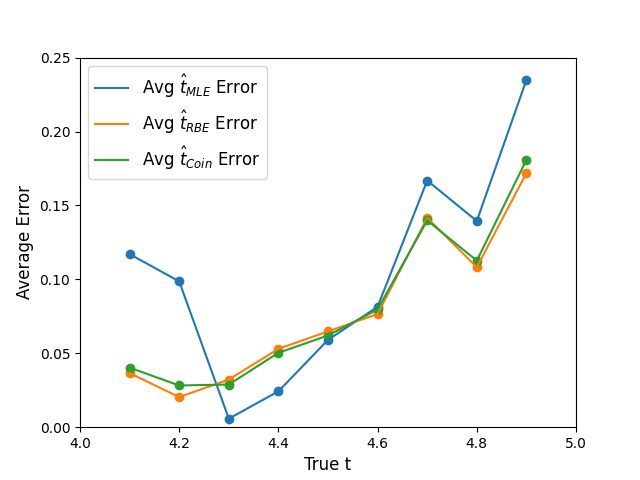}
    \end{minipage}

    \captionof{figure}{Left: The average of 20 estimates using each of the three methods (MLE, RBE, and Coin
    Approximation) from experiments on \textit{ibm\_perth} with $n = 3$ qubits and values of $t$ at increments of 0
    .1 in between 2.1 and 2.9 and the average error of the respective estimates. Right: The average
    of 20 estimates using each of the three methods (MLE, RBE, and Coin Approximation) from experiments on
    \textit{ibm\_perth} with $n = 3$ qubits and values of $t$ at increments of 0.1 in between 4.1 and 4.9, and
    the average error of the respective estimates.}
    \label{fig:perth_n3}
\end{figure*}

\begin{figure*}[p]
    \vspace{-6in}
    \centering
    \begin{minipage}{.4\textwidth}
        \centering
        \includegraphics[width=.85\linewidth, trim = 20pt 20pt 20pt 20pt]{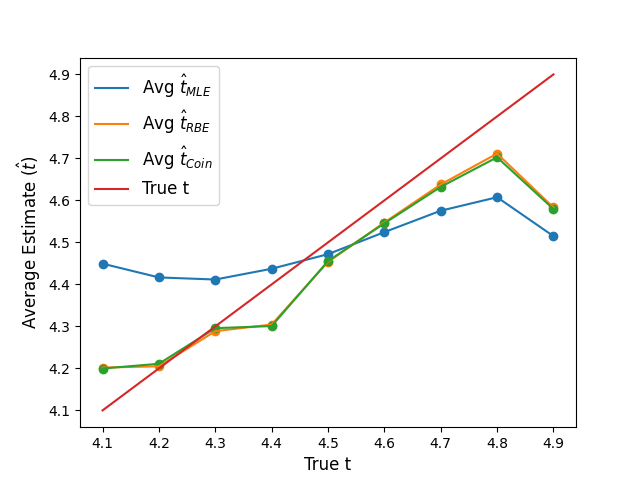}
    \end{minipage}
    \begin{minipage}{.4\textwidth}
        \centering
        \includegraphics[width=.85\linewidth, trim = 20pt 20pt 20pt 20pt]{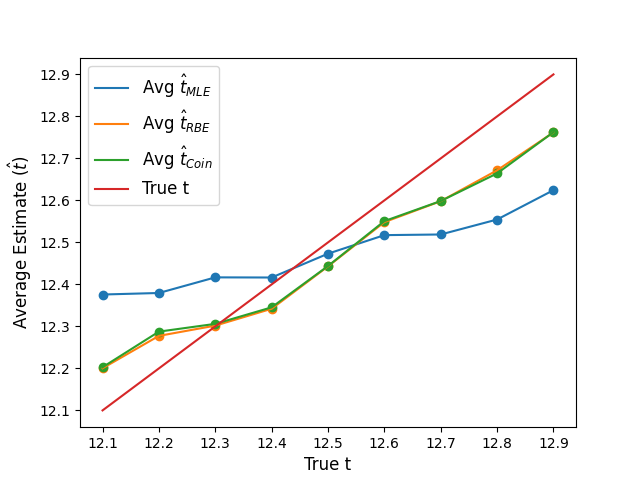}
    \end{minipage}

    \bigskip
    \begin{minipage}{.4\textwidth}
        \centering
        \includegraphics[width=.85\linewidth, trim = 20pt 20pt 20pt 20pt]{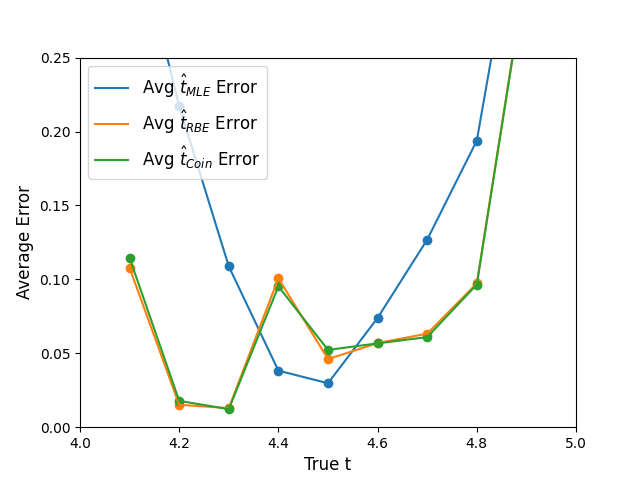}
    \end{minipage}
    \begin{minipage}{.4\textwidth}
        \centering
        \includegraphics[width=.85\linewidth, trim = 20pt 20pt 20pt 20pt]{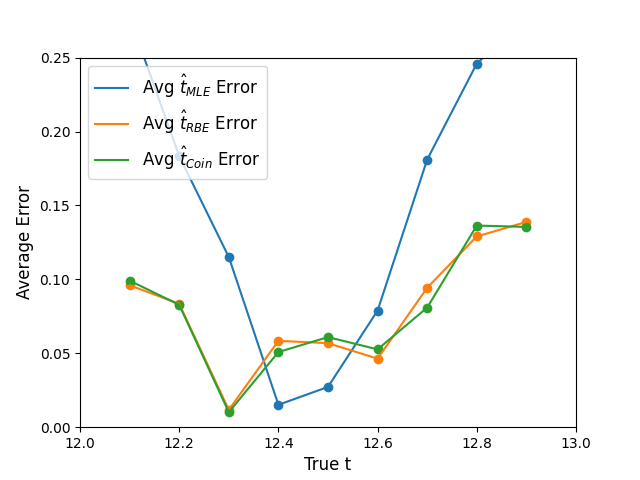}
    \end{minipage}

    \captionof{figure}{Left: The average of 20 estimates using each of the three methods (MLE, RBE, and Coin
    Approximation) from experiments on \textit{ibm\_perth} with $n = 4$ qubits and values of $t$ at increments of 0
    .1 in between 4.1 and 4.9, and the average error of the respective estimates. Left: The average of 20 estimates
    using each of the three methods (MLE, RBE, and Coin Approximation) from experiments on \textit{ibm\_perth} with
        $n = 4$ qubits and values of $t$ at increments of 0.1 in between 12.1 and 12.9, and the average error of the
        respective estimates.}
    \label{fig:perth_n4}
\end{figure*}

\end{document}